\documentclass{article}
\usepackage{graphicx} 

\title{Simple and Optimal Sublinear Algorithms for Mean Estimation}
\author{Beatrice Bertolotti\thanks{Università di Pavia} \and Matteo Russo\thanks{Sapienza University of Rome} \and Chris Schwiegelshohn\thanks{Aarhus University} \and Sudarshan Shyam\footnotemark[1]}
\date{}


\usepackage[margin=1in]{geometry}
\usepackage{lmodern}
\usepackage[T1]{fontenc}
\usepackage[utf8]{inputenc}
\usepackage[tbtags]{amsmath}
\allowdisplaybreaks
\usepackage{amssymb,amsthm,mathtools, bbm}
\usepackage{physics}
\usepackage{hyperref}
\usepackage[svgnames]{xcolor}
\usepackage[capitalise,nameinlink]{cleveref}
\definecolor{links}{RGB}{11, 85, 255}
\definecolor{cites}{RGB}{0, 200, 0}
\definecolor{urls}{RGB}{255, 116, 0}
\hypersetup{colorlinks={true},linkcolor={links},citecolor=[named]{cites},urlcolor={urls}}
\usepackage[numbers,compress]{natbib}
\bibliographystyle{abbrvnat}
\usepackage{doi}
\usepackage{tikz} 
\usepackage{pgfplots}
\pgfplotsset{compat=1.14}
\usepgfplotslibrary{fillbetween}
\usepackage{hyphenat}
\usepackage[font=footnotesize]{caption}
\usepackage{subcaption}
\usepackage{appendix}
\usepackage{nicefrac}
\usepackage{tcolorbox}

\usepackage{todonotes}

\usepackage{algorithm}
\usepackage[noend]{algpseudocode}

\newcommand{\cE}{\mathcal{E}}

\newcommand{\OPT}{\textup{\textsc{Opt}}}
\newcommand{\ALG}{\textup{\textsc{Alg}}}

\renewcommand{\exp}[1]{{\textup{exp}}\left(#1\right)}

\newcommand{\indep}{\perp\kern-5pt\perp}

\newcommand{\proj}{\textmd{\textup{proj}}}

\newcommand{\poly}{\textmd{\textup{poly}}}



\newtheorem{theorem}{Theorem}[section]
\newtheorem*{theorem*}{Theorem}
\newtheorem{lemma}[theorem]{Lemma}
\newtheorem{proposition}[theorem]{Proposition}
\newtheorem{corollary}[theorem]{Corollary}

\newtheorem{remark}[theorem]{Remark}

\begin{document}

\maketitle

\begin{abstract}
We study the sublinear multivariate mean estimation problem in $d$-dimensional Euclidean space. Specifically, we aim to find the mean $\mu$ of a ground point set $A$, which minimizes the sum of squared Euclidean distances of the points in $A$ to $\mu$. We first show that a multiplicative $(1+\varepsilon)$ approximation to $\mu$ can be found with probability $1-\delta$ using $O(\varepsilon^{-1}\log \delta^{-1})$ many independent uniform random samples, and provide a matching lower bound. Furthermore, we give two estimators with optimal sample complexity that can be computed in optimal running time for extracting a suitable approximate mean:
\begin{enumerate}
    \item The coordinate-wise median of $\log \delta^{-1}$ sample means of sample size $\varepsilon^{-1}$. As a corollary, we also show improved convergence rates for this estimator for estimating means of multivariate distributions.
    \item The geometric median of  $\log \delta^{-1}$ sample means of sample size $\varepsilon^{-1}$. To compute a solution efficiently, we design a novel and simple gradient descent algorithm that is significantly faster for our specific setting than all other known algorithms for computing geometric medians.
\end{enumerate}
In addition, we propose an order statistics approach that is empirically competitive with these algorithms, has an optimal sample complexity and matches the running time up to lower order terms.

We finally provide an extensive experimental evaluation among several estimators which concludes that the geometric-median-of-means-based approach is typically the most competitive in practice.
\end{abstract}

\section{Introduction}
An extremely simple algorithmic paradigm is to sample a subset of the input independently and uniformly at random and solve a problem on the sample. Such algorithms are called sublinear algorithms and form a staple of large data analysis.
The most desirable property of a sublinear algorithm is that the query of sample complexity is as small as possible. 
For most problems, it is rare for a sublinear algorithm with only constant queries to produce multiplicative guarantees. Remarkably, finding the multivariate mean of a point set $A$ in Euclidean space is one of the cases where not only can we obtain a multiplicative error with constant queries, but can do so with arbitrary precision.
Specifically, we are interested in the objective
\[\min_{c\in \mathbb{R}^d}\sum_{p\in A}\|p-c\|^2,\]
where $\|x\| = \sqrt{\sum_{i=1}^d x_i^2}$ denotes the Euclidean norm.
It is well known that the mean of the point set $A$ minimizes this expression. We say that $c$ is a $(1+\varepsilon)$-approximate mean, if the cost of $c$ is within a $(1+\varepsilon)$ factor of the cost of the optimal mean.

The exact sample complexity for obtaining approximate means is still an open problem.
\citet{InabaKI94} showed that $\varepsilon^{-1}$ uniform and independent samples are necessary and sufficient for the empirical mean to be a $(1+\varepsilon)$-approximate mean on expectation. To extend this to a high probability guarantee, that is, a success probability of $1-\delta$, $O(\varepsilon^{-1}\delta^{-1})$ samples are sufficient. Unfortunately, this bound is tight when using the empirical mean (see Appendix \ref{sec:lb}). 
\citet{Cohen-AddadSS21} improved the dependence on the probability of failure by achieving a sample complexity of $\tilde{O}(\varepsilon^{-5}\log^2 \delta^{-1})$, at the cost of a significantly worse dependency on $\varepsilon$. In the same paper, the authors also gave a lower bound of $\Omega(\varepsilon^{-1})$ for any sublinear algorithm succeeding with constant probability, showing that the result by \citet{InabaKI94} is optimal in the constant success probability regime.
Subsequently, \cite{9996923} presented an algorithm with $O\left(\varepsilon^{-1}d(\log^2d \cdot\log(d\varepsilon^{-1}) + \log \delta^{-1})\log\delta^{-1}\right)$ queries, which combines the tight dependency on $\varepsilon$ from \cite{InabaKI94} with the improved dependency on $\delta$ from \citet{Cohen-AddadSS21}, at the cost of a dependency on the dimension.
Recently, \citet{woodruff2024coresets} achieved almost best of all worlds trade-offs with a sample complexity of $O(\varepsilon^{-1}(\log \varepsilon^{-1} +\log \delta^{-1})\log \delta^{-1})$.

\subsection{Our Contributions}

Our main contributions are three algorithms that estimate the mean of a point set and that have optimal sample complexity. In addition, the running times are linear in the sample complexity for one and almost linear for the other two  and thus optimal. In the first step of the algorithms, see also Algorithm~\ref{alg:empirical}, we draw $O(\varepsilon^{-1}\log \delta^{-1})$ points uniformly at random. We then split the sample into $O(\log \delta^{-1})$ subsamples of size $O(\varepsilon^{-1})$, compute the empirical mean for each subsample, and run an aggregation procedure, i.e., a subroutine $\textsc{Aggregate}(P)$ that combines the sample set produced to output an estimate of the mean. 
\begin{algorithm}
\caption{\textsc{MeanEstimate}$(\varepsilon, \delta)$}
    \begin{algorithmic}
        \For{$i = 1, \ldots, b\log\delta^{-1}$}
            \State Sample $S_i$ points independently and uniformly at random with $|S_i|=a\varepsilon^{-1}$ 
            \State Compute the sample mean $\hat \mu_i = \frac{1}{|S_i|} \cdot \sum_{p \in S_i} p$
        \EndFor
        \State Output $\textsc{Aggregate}(\hat \mu_1, \ldots, \hat \mu_{b\log \delta^{-1}})$
    \end{algorithmic}
\label{alg:empirical}
\end{algorithm}

The sample complexity of this algorithm, $m = b \log \delta^{-1} \cdot a \varepsilon^{-1}$ turns out to be optimal, as shown in Section~\ref{sec:lb}. 
Phrased as a generalization question for bounding the excess risk given independent samples $S$ from an underlying arbitrary but fixed distribution, we show an optimal convergence rate of $\Theta\left(\frac{\log \delta^{-1}}{|S|}\right)$.

The remaining question is to find a good solution (i.e., run $\textsc{Aggregate}$) quickly. Clearly, the sample complexity is also a lower bound for the running time of any algorithm. Achieving this running time and in particular, breaking through the $O(\log^2 \delta^{-1})$ barrier inherent in all previous algorithms, is a substantial challenge, even for computing any constant approximation. 
It is fairly easy to show that most of the sample means are good estimators for the true mean. Selecting one of the good estimators is nevertheless difficult, as we have no way of estimating the cost and thus no easy way of distinguishing successful estimators from unsuccessful ones. Instead, we propose three algorithm to perform such a task.

\paragraph{Coordinate-Wise-Median-of-Means and Geometric-Median-of-Means.} 
Both the coordinate-wise-median-of-means and the geometric median-of-means are natural generalizations of the median-of-means estimator in a single dimension. It turns out both have optimal sample complexity, up to constant factors. The former can be straightforwardly computed in linear time.
Achieving a good running time for the latter is more challenging.
Black box applications of gradient descent methods do not offer fast running times to compute suitable medians in our setting, but we show that a relatively simple gradient descent (see Algorithm \ref{alg:median}) finds a suitable solution in linear time.

Using theoretical benchmarks, it is not possible to determine the more favorable estimator and it is not too difficult to construct instances where one outperforms the other. We therefore conduct an extensive experimental evaluation on real world data sets to measure the performance of these two estimators, along with previously proposed algorithms and our third contribution, the MinSumSelect algorithm.

\paragraph{MinSumSelect.} Our other algorithm, given in Algorithm \ref{alg:minsum} (in Appendix~\ref{app:minsumselect}), uses a carefully chosen order statistic combined with clustering ideas. Specifically, the observation is that the sum of distances of the $t$ closest points allows us to distinguish between a successful estimate and an unsuccessful estimate, for an appropriate choice of $t$. Here, the running time improves to $O((\varepsilon^{-1}+\log^{\gamma} \delta^{-1})\cdot d\log \delta^{-1})$, for any constant $\gamma>0$. The algorithm can also be parameterized to achieve a similar running time to the gradient descent algorithm, at the cost of increasing the sample size by $\poly\left(\log\log\delta^{-1}\right)$ factors.

\subsection{Related Work} 
We covered the related work specific to sublinear algorithms for estimating the mean earlier, but there are several other closely related topics.
Given a distribution, a classical question is to design good estimators for some underlying statistic. For high-dimensional means, there has been substantial attention recently, see \cite{LeeV22,LugosiM19,LG24} and references therein. There are common notions with our work, as the distance to the true mean and tail bounds for the estimators are an important quantity in our setting as well. Nevertheless, in this line of work, the running time, which is a very important focus for sublinear algorithms, is not a primary concern. 

Another interesting line of work focuses on data dependent mean estimation, where the bounds derived are dependent on the quality of the data \citep{DangLSV23}. There are limited regimes in which estimators with better than sub-Gaussian estimators are possible. This is especially interesting given the recent efforts in the direction of beyond worst-case analysis of algorithms.

In a more general setting, one could ask for approximate power means minimizing the objective 
\[
    \min_{c\in \mathbb{R}^d}\sum_{p\in A}\|p-c\|^z,
\]
for some positive integer $z$. For $z>2$, this problem was first posed by \citet{Cohen-AddadSS21} who showed that $\tilde{O}(\varepsilon^{-z-3}\log^2 \delta^{-1})$ samples were sufficient for a high success probability while $\Omega(\varepsilon^{-z+1})$ samples were necessary for constant success probability. \citet{woodruff2024coresets} improved this even further, obtaining a sample complexity of 
$O(\varepsilon^{-z+1}(\log \varepsilon^{-1} +\log \delta^{-1})\log \delta^{-1})$. Perhaps surprisingly, the case $z=1$, also known as the geometric median requires $\Theta(\varepsilon^{-2}\log \delta^{-1})$ many samples, see \cite{chen2021query,parulekar_et_al:LIPIcs.APPROX/RANDOM.2021.49} for lower bounds and \citet{cohen2016geometric} for an optimal algorithm. Finally, the special case $z\rightarrow \infty$ corresponds to the minimum enclosing ball problem. The solution is highly susceptible to outliers, thus any sublinear algorithm must either be allowed to discard a fraction of the input, or rely on additive guarantees (see \cite{ClarksonHW12} and \cite{Ding20} for examples). 

The sample complexity for the $k$-means problem has been extensively studided in the context of generalization bounds. Here, we are given an arbitrary but fixed distribution $\mathcal{D}$ typically supported on the unit Euclidean sphere, we aim to find a set of $k$ centers $C$ minimizing $\int_{p} \min_{c\in C}\|p-c\|^2 \mathbb{P}_{\mathcal{D}}[p]dp$. Following a long line of work, the best currently known learning rates with a sample size $n$ for this problem are of the order $O\left(\sqrt{\frac{k\log k}{|S|} + \frac{\log \delta^{-1}}{|S|}}\right)$, see \cite{BLL98,C11,Cohen-AddadLS25,FMN16,KKZ21,L00,NarayananM10,Pol82} and references therein. Interestingly, the best known lower bounds on the learning rates for arbitrary values of $k$ are at least $\Omega\left(\sqrt{\frac{k}{|S|}}\right)$, see \cite{BLL98,BucarelliLST23}.

Another related line of work is on coresets. For center-based objectives, a coreset approximates the cost of \emph{any} given center with respect to the sum of distances raised to some power, where $z=2$ is the power related to the $1$-means objective. While there are some exceptions that obtain very small coreset sizes via other techniques, see \cite{AfshaniS24,BravermanCJKST022,HuangHH023,MaaloufJF21,MaaloufJF22}, the sample complexity of sensitivity sampling has by far been the most widely studied complexity measure in this line of work, see \cite{BCPSS24,BravermanJKW21,Cohen-AddadSS21,FeldmanL11,LangbergS10}.

\section{Preliminaries}\label{sec:prelim}

For a set of points $A\subset \mathbb{R}^d$, whose cardinality is $|A|=n$, we use $\mu(A):= \frac{1}{n} \cdot \sum_{p\in A} p$ to denote the mean, using $\mu$ if $A$ is clear from the context. 
Denote by $\OPT = \min_{c\in \mathbb{R}^d}\sum_{p\in A} \|p-c\|^2$. In addition, for a set of points $S$ sampled uniformly at random from $A$, we call $\hat{\mu}(S)$ the empirical estimator of $\mu(A)$, or simply $\hat{\mu}$ if $S$ is clear from context. A useful relationship between the goodness of $\hat{\mu}$ and the size of $|S|$ is established via the following identity.
\begin{lemma}[Lemma 1 of \cite{InabaKI94}]
\label{lem:inaba}
$\mathbb{E}\left[\|\mu(A)-\hat{\mu}(S)\|^2 \right] = \frac{1}{|S|}\cdot \frac{\OPT}{n}$.   
\end{lemma}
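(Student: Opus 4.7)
The plan is to treat $\hat\mu(S)$ as the sample mean of independent uniform draws from $A$ and reduce the claim to a standard variance-of-the-sample-mean computation. Write $S = (X_1,\dots,X_s)$ with $s := |S|$ and each $X_i$ drawn uniformly (and independently) from $A$. Since $\mathbb{E}[X_i] = \mu(A)$ for every $i$, the centered variables $Y_i := X_i - \mu(A)$ have mean zero.

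Next, I would expand the squared norm by bilinearity of the inner product:
\[
\|\hat\mu(S) - \mu(A)\|^2 \;=\; \Bigl\|\tfrac{1}{s}\sum_{i=1}^s Y_i\Bigr\|^2 \;=\; \frac{1}{s^2}\sum_{i=1}^s \sum_{j=1}^s \langle Y_i, Y_j \rangle.
\]
Taking expectation, the cross terms vanish: for $i \neq j$, independence gives $\mathbb{E}[\langle Y_i, Y_j\rangle] = \langle \mathbb{E}[Y_i], \mathbb{E}[Y_j]\rangle = 0$. Only the $s$ diagonal terms survive.

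For each diagonal term, I would use the fact that $X_i$ is uniform on $A$:
\[
\mathbb{E}[\|Y_i\|^2] \;=\; \mathbb{E}[\|X_i - \mu(A)\|^2] \;=\; \frac{1}{n}\sum_{p \in A}\|p-\mu(A)\|^2 \;=\; \frac{\OPT}{n},
\]
where the last equality is the definition of $\OPT$ together with the well-known fact that $\mu(A)$ is the minimizer of $\sum_{p\in A}\|p-c\|^2$. Combining these gives
\[
\mathbb{E}\bigl[\|\hat\mu(S) - \mu(A)\|^2\bigr] \;=\; \frac{1}{s^2}\cdot s \cdot \frac{\OPT}{n} \;=\; \frac{1}{|S|}\cdot \frac{\OPT}{n},
\]
as claimed.

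There is no real obstacle here — this is essentially the identity $\mathrm{Var}(\bar X) = \mathrm{Var}(X)/s$ applied coordinatewise and then summed. The only tiny thing to verify is that the sampling model is with replacement (which is the standard convention for these sublinear algorithms, and consistent with the way $S_i$ is drawn in Algorithm~\ref{alg:empirical}); if sampling were without replacement, the same bound would follow with an even smaller constant via a negative-correlation argument on the cross terms.
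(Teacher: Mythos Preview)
Your argument is correct and is the standard variance-of-the-sample-mean computation. Note, however, that the paper does not actually prove this lemma: it is quoted verbatim from \cite{InabaKI94} and used as a black box, so there is no ``paper's proof'' to compare against beyond saying that your derivation is precisely the classical one underlying that citation.
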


At the heart of most algorithms for Euclidean means lies the following (arguably folklore) identity. 
\begin{lemma}[High Dimensional Mean-Variance Decomposition]
\label{lem:magic}
    For any set of points $A\subset \mathbb{R}^d$ and any $c\in \mathbb{R}^d$, we have
    $\sum_{p\in A}\|p-c\|^2 = \sum_{p\in A}\|p-\mu\|^2 + n\cdot \|\mu-c\|^2$.
\end{lemma}
\begin{proof}
We have
\begin{align*}
    \sum_{p\in A}\|p-c\|^2 &= \sum_{p\in A}\|p-\mu+\mu-c\|^2 \\
    &=  \sum_{p\in A}\left(\|p-\mu\|^2+\|\mu-c\|^2 - 2(p-\mu)^\top(\mu-c)\right)\\ 
    &= \sum_{p\in A}\|p-\mu\|^2+n\cdot\|\mu-c\|^2, 
\end{align*}    
where the last equality follows from $\sum_{p\in A} (p-\mu) = 0$.
\end{proof}

Lemma \ref{lem:magic} implies that $\hat{\mu}$ is a $(1+\varepsilon)$ appoximation if and only if $\|\hat{\mu}-\mu\|\leq \sqrt{\frac{\varepsilon\OPT}{n}}$, which we will often use throughout this paper. For space reasons, some proofs are omitted in the main body and deferred to the appendix.

For the proofs in the next sections, we require a basic probabilistic lemma.
We say that an empirical mean $\hat\mu_i$ is \textit{good}, if $\|\hat\mu_i-\mu\|\leq  r$, where $r = \frac{1}{11}\sqrt \frac{\varepsilon \OPT}{n}$. We denote the set of \textit{good} sample means by $G$, and define event
\begin{align}\label{eq:eventE}
    \cE := \left\{|G|\geq \frac{7}{10}\cdot b\log \delta^{-1}\right\}.
\end{align}

The following lemma is a straightforward application of the Chernoff bound.
\begin{lemma}
\label{lem:eventE}
    With probability $1-\delta$, event $\cE$ holds, for a sufficiently large absolute constants $a$ and $b$.
\end{lemma}

\begin{proof}
    We set the values of the constants to be $a=1440$, $b = 50$ and $r = \frac{1}{11}\sqrt \frac{\varepsilon \OPT}{n}$ (for reasons which will be clear in the proof).
    First, we show that a sample mean of $a \varepsilon^{-1}$ samples is good with probability at least $0.9$. This follows from Lemma \ref{lem:inaba} and Markov's inequality.

    \begin{align*}
        \mathbb{P} \left[| |\hat\mu_i -\mu\|^2 \le r^2  \left(- \frac{\varepsilon\OPT}{121n} \right)\right] \ge 0.9.
    \end{align*}
    
Applying the standard Chernoff bound (with $b = 50)$, we get 

\begin{align*}
  \mathbb{P}\left[ |G| \leq 
  \frac{7}{10} \cdot b\log \delta^{-1} \right]  
  \leq \text{exp}\left( - \frac{2}{81} \cdot b \log \delta^{-1} \right) \leq \delta,
\end{align*}

where the $b$ is chosen so that $-\frac{2b}{81} < -1$.
\end{proof}

\section{Coordinate-Wise Median-of-Means}\label{sec:coordwise-median}

In this section, we show that the coordinate-wise median-of-means is an optimal estimator and thus leads to a linear time algorithm for mean estimation. That is, at the end of Algorithm~\ref{alg:empirical} we simply use as aggregation procedure $\textsc{Aggregate}(\hat \mu_1, \ldots, \hat \mu_{b\log \delta^{-1}}) = \textsc{CoordWiseMedian}(\hat \mu_1, \ldots, \hat \mu_{b\log \delta^{-1}})$. We first do so in the context of sublinear algorithms and then extend the result to the case of distributional mean estimation in high dimensions.

\subsection{Coordinate-Wise Median-of-Means in the Sublinear Model}
We prove the following theorem:

\begin{theorem}
\label{thm:coordwise-median}
Algorithm \ref{alg:empirical} run with $\nu_{\textsc{CWM}} := \textsc{CoordWiseMedian}(\hat \mu_1, \ldots, \hat \mu_{b\log \delta^{-1}})$ as an aggregation routine outputs a $(1+\varepsilon)$-approximate Euclidean mean with probability $1-\delta$. The running time is $O\left (\varepsilon^{-1} \cdot d\log \delta^{-1}\right)$.
\end{theorem}

\begin{proof}
    Consider $\nu_{\textsc{CWM}}$, the coordinate-wise median of all the sample means. For each coordinate $k$, let $L_k$ and $R_k$ be the sets of sample means such that $\hat \mu_{i,k} \leq \nu_{\textsc{CWM},k}$ and $\hat \mu_{i,k} \geq \nu_{\textsc{CWM},k}$ respectively. We know that $L_k, R_k$ have cardinality at least $\frac{b\log \delta^{-1}}{2}$.
    Depending on whether $\nu_{\textsc{CWM},k} > \mu_k$ or not, we show that at least one of the following statements is true:
\begin{enumerate}
    \item For all $\hat \mu$ in $L_k$, we have that $|\hat \mu_{i,k} - \mu_k| \ge |\nu_{\textsc{CWM},k} - \mu_k|$.
    \item For all $\hat \mu$ in $R_k$, we have that $|\hat \mu_{i,k} - \mu_k| \ge |\nu_{\textsc{CWM},k} - \mu_k|$.
\end{enumerate}

We have $|L_k \cup R_k| = b \log \delta^{-1}$ which is the total number of sample means. First, from the definitions of $L_k$ and $R_k$, we observe that there are points (such that $\hat \mu_k = \nu_{\textsc{CWM},k}$) which belong to both. Along with the definition of the coordinate-wise median, we have $|L_k|, |R_k| \geq b \log \delta^{-1}/2$. Note that if there are no points such that $\hat \mu_k = \nu_{\textsc{CWM},k}$, then $|L_k|= |R_k| = b \log \delta^{-1}/2$.

We provide a case analysis as to why at least $1/5$ fraction of the samples are good and satisfy $|\hat \mu_{i,k} - \mu_k| \ge |\nu_{\textsc{CWM},k} - \mu_k|$.

First, we note that when the good event $\mathcal{E}$ holds, at least $1/5$ fraction of the means are good in both the sets $L_k$ and $R_k$. This follows from the fact that at least $7/10$ fraction of the means are good and the number of good means in $L_k$/$R_k$ is at least $7/10 - 1/2 = 1/5$.\\

\noindent \textbf{Case I:} $\mu_k \leq \nu_{\textsc{CWM},k}$. In this case, we have for all means in $R_k$, $|\hat \mu_{i,k} - \mu_k| \ge |\nu_{\textsc{CWM},k} - \mu_k|$.\\

\noindent \textbf{Case II:}  $\mu_k > \nu_{\textsc{CWM},k}$. In this case, we have for all means in $L_k$, $|\hat \mu_{i,k} - \mu_k| \ge |\nu_{\textsc{CWM},k} - \mu_k|$.\\

From the above statements, we have for at least one of $L_k$ and $R_k$, we have that at least $1/5$ fraction of the means are good and satisfy $|\hat \mu_{i,k} - \mu_k| \ge |\nu_{\textsc{CWM},k} - \mu_k|$.

    From Lemma \ref{eq:eventE}, we know that at least $\frac{7}{10}$ fraction of the means are \emph{good} (i.e., $\| \hat \mu_i - \mu\| \leq r$) with probability at least $1-\delta$. Hence, we infer that at least $\frac{7}{10}-\frac{1}{2} = \frac{1}{5}$ of the sample means are \emph{good} and satisfy $| \hat \mu_{i,k} - \mu_k | \geq | \nu_{\textsc{CWM},k} - \mu_k|$. On average, 
    \[
\frac{1}{|G|} \sum_{i \in G} | \hat \mu_{i,k} - \mu_k |^2 \geq \frac{| \nu_{\textsc{CWM},k} - \mu_k|^2}{5}.
\]
Summing over all coordinates $k$ gives
\[
\sum_{k \in [d]} \frac{1}{|G|} \sum_{i \in G} | \hat \mu_{i,k} - \mu_k |^2 \geq \sum_{k \in [d]} \frac{| \nu_{\textsc{CWM},k} - \mu_k |^2}{5} = \frac{\left\| \nu_{\textsc{CWM}} - \mu \right\|^2}{5}.
\]
Interchanging the sums on the left-hand side, we get
\[
\left\| \nu_{\textsc{CWM}} - \mu \right\|^2 \leq  \frac{5}{|G|} \sum_{i \in G} \sum_{k \in [d]} | \hat \mu_{i,k} - \mu_k |^2 = \frac{5}{|G|} \sum_{i \in G} \left\| \hat \mu_i - \mu \right\|^2 \leq 5r^2.
\]
The theorem follows from the definition of $r = \frac{1}{11} \sqrt{\frac{\varepsilon\OPT}{n}}$, and the running time follows by recognizing that we take $O(\log \delta^{-1})$ sample means, each composed of $O(\varepsilon^{-1})$ many points in $d$ dimensions. 
\end{proof}

\subsection{Learning the Mean of a High-Dimensional Distribution}

In this section, we relate the results in the literature with respect to mean estimation of high dimensional multivariate probability distributions and place our results in the context of these works.

We first describe the setting. The goal is to estimate the mean $\mu$ of a probability distribution $\mathcal{D}$ in $\mathbb{R}^d$. A standard assumption is the existence of a covariance matrix $\Sigma$. We want the estimate to be close to the the true mean $\mu$ and the Euclidean distance to the mean is used as the objective. Given $N$ i.i.d. samples $X_1, X_2, \dots, X_N \sim \mathcal{D}$, the goal is to construct an estimator $\nu = \nu (X_1, X_2, \dots, X_N)$ such that, for any confidence interval $\delta \in [0,1]$, with probability at least $1-\delta$, we have $\|\mu - \nu \| \leq \varepsilon(N,\delta, \Sigma)$. The function $\varepsilon(N,\delta, \Sigma)$ is referred to as the rate of convergence of error henceforth.
 
The coordinate-wise median-of-means, while not optimal, is a simple and natural extension of the algorithm for the $1$-dimension. 
A simple analysis, found in several works, see \cite{LugosiM19} and \cite{Minsker2015}, applies a union bound over all dimensions and concludes that if the mean is concentrated along every dimension, it will be concentrated in general. This analysis proves a convergence rate of the order $O\left(\sqrt{\frac{\textsc{Tr}(\Sigma)
\log (d\delta^{-1})}{N}}\right)$, where $\textsc{Tr}(\Sigma)$ is the trace of the covariance matrix. Somewhat misleadingly, the dependency on $\log d$ is often implied to be necessary. 

The analysis of the coordinate-wise median-of-means for sublinear algorithms can be adapted to yield convergence rate for the mean estimation problem. Specifically, the coordinate-wise median-of-means estimator has the following, dimension-free convergence rate.

\begin{theorem}
\label{thm:distributionalmean}

     Let $X_1, X_2, \dots, X_N$ be independent samples from distribution $\mathcal{D}$ with variance $\mu$ and covariance matrix $\Sigma$, then the coordinate-wise median-of-means estimator $\nu_{\textsc{CWM}}$, with probability at least $1-\delta$, satisfies
    \[
        \|\nu_{\textsc{CWM}} - \mu \| \leq 40 \sqrt{\frac{\textsc{Tr}(\Sigma) \log \delta^{-1}}{N}}
    \]
\end{theorem}
\begin{proof}
    The $N$ samples are partitioned into $8\log \delta^{-1}$ subsamples of size $\frac{N}{8\log \delta^{-1}}$ each. The algorithm computes the empirical mean of each subsample and returns the coordinate-wise median of the set of empirical means. 

    We prove the analog of Lemma \ref{lem:eventE} for this case, showing that a large fraction of sample means lie close to the true mean. For an empirical mean of $\frac{N}{8\log \delta^{-1}}$ samples, we have $\mathbb{E}[\|\hat \mu - \mu \|^2]  = \frac{8\textsc{Tr} (\Sigma) \log \delta^{-1}}{N}$. We call an empirical mean \emph{good} if $\|\hat \mu - \mu \|\leq r$ where $r = 13 \sqrt{\frac{\textsc{Tr}(\Sigma) \log \delta^{-1}}{N}}$ and let $G$ denotes the set of \emph{good} means.  Using $\textsc{Tr}(\Sigma) = \mathbb{E}[\|X - \mu\|^2]$ and the Markov's inequality, we have $\mathbb{P}[\hat \mu \in G] \geq 1- \frac{8}{13^2}\geq 0.95$. 
    Applying the Chernoff bound, we get that with probability at least $1-\delta$, we have $|G| \geq \frac{7}{10} \log \delta^{-1}$.

       Consider $\nu_{\textsc{CWM}}$, the coordinate-wise median of all the sample means. For each coordinate $k$, let $L_k$ and $R_k$ be the sets of sample means such that $\hat \mu_{i,k} \leq \nu_{\textsc{CWM},k}$ and $\hat \mu_{i,k} \geq \nu_{\textsc{CWM},k}$ respectively. We know that $L_k, R_k$ have cardinality at least $\frac{b\log \delta^{-1}}{2}$.
    Depending on whether $\nu_{\textsc{CWM},k} > \mu_k$ or not, at least for one of $L_k, R_k$, we have that $|\hat \mu_{i,k} - \mu_k| \geq |\nu_{\textsc{CWM},k} - \mu_k|$. 
    
    We know that at least $\frac{7}{10}$ fraction of the means are \emph{good} (i.e., $\| \hat \mu_i - \mu\| \leq r$) with probability at least $1-\delta$. Hence, we infer that at least $\frac{7}{10}-\frac{1}{2} = \frac{1}{5}$ of the sample means are \emph{good} and satisfy $| \hat \mu_{i,k} - \mu_k | \geq | \nu_{\textsc{CWM},k} - \mu_k|$. On average, 
    \[
\frac{1}{|G|} \sum_{i \in G} | \hat \mu_{i,k} - \mu_k |^2 \geq \frac{| \nu_{\textsc{CWM},k} - \mu_k|^2}{5}.
\]
Summing over all coordinates $k$ gives
\[
\sum_{k \in [d]} \frac{1}{|G|} \sum_{i \in G} | \hat \mu_{i,k} - \mu_k |^2 \geq \sum_{k \in [d]} \frac{| \nu_{\textsc{CWM},k} - \mu_k |^2}{5} = \frac{\left\| \nu_{\textsc{CWM}} - \mu \right\|^2}{5}.
\]
Interchanging the sums on the left-hand side, we get
\[
\left\| \nu_{\textsc{CWM}} - \mu \right\|^2 \leq  \frac{5}{|G|} \sum_{i \in G} \sum_{k \in [d]} | \hat \mu_{i,k} - \mu_k |^2 = \frac{5}{|G|} \sum_{i \in G} \left\| \hat \mu_i - \mu \right\|^2 \leq 5r^2.
\]
The theorem follows from the definition of $r = 13 \sqrt{\frac{\textsc{Tr}(\Sigma) \log \delta^{-1}}{N}}$.
\end{proof}

\subsection{Comparison of Geometric Median and the Coordinate-wise median}\label{sec:examples}

We conclude this section, as anticipated, by providing two instances which demonstrate that neither of the coordinate-wise median or the geometric median is strictly better than the other for the problem of mean-estimation.

\begin{proposition}
    Let $\delta > 0$ be fixed. There exist instances such that, with probability at $1-O(\delta)$, 
    \begin{itemize}
        \item[(1)] The coordinate-wise median of $K = \frac{\log \delta^{-1}}{2\left(\frac{1}{2}-e^{-1}\right)^2}$ sample means of $\varepsilon^{-1}$ sampled points coincides with the true mean, but the geometric median  of the same $K$ sample means does not;
        \item[(2)] The geometric median  of $\log \delta^{-1}$ sample means of $\varepsilon^{-1}$ sampled points is a better approximation to the true mean than the coordinate-wise median is.
    \end{itemize}
\end{proposition}
\begin{proof}
    For part (1), we construct an instance for which the coordinate-wise median is a better estimator than the geometric median. Consider the uniform distribution on the $d$-dimensional simplex where $d = O(\varepsilon^{-2} \delta^{-2} \log^2 \delta^{-1})$. The value of $d$ ensures that with probability at least $1-\delta$, each of the $\varepsilon^{-1} \log \delta^{-1}$ vertices sampled are distinct. We show that in this case the coordinate-wise median is a better aggregation procedure than the geometric median. The coordinate median $\nu_{\textsc{CWM}}$ at the origin, while the true mean is $\mu = (1/d, 1/d, \dots, 1/d)$. The geometric median can be shown to lie at the empirical mean of all the samples due to symmetry, which is $\mu_{\textsc{GM}} = (\varepsilon/ \log(\delta^{-1}), \varepsilon/ \log(\delta^{-1}), \dots, 0, \dots,0)$ where there are exactly $\log(\delta^{-1})/\varepsilon$ non-zero coordinates. Comparing the distance, we see that  $\|\mu - \nu_{\textsc{CWM}}\| \leq \|\mu - \mu_{\textsc{GM}}\|$. To conclude, we note that with probability at least $1-\delta$, the coordinate-wise median is a better estimator than the geometric median.
    
    For part (2), we wish to prove that there exists an instance in which the geometric median is better than the coordinate-wise median. Consider the uniform distribution on the $d$-dimensional simplex, where $d = c\varepsilon^{-1}$.  We choose $c$ to be a constant large enough so that in every subsample, the probability of a vertex sampled is $\frac{1}{3}$ (we only require it to be bounded away from $\frac{1}{2}$). As in the last example, we have that the actual mean is $(1/d, 1/d, \dots,1/d)$. We observe that the coordinate-wise median $\nu_{\textsc{CWM}}$ is the origin with high probability. While as $\delta \xrightarrow{} 0$, the geometric median tends  towards true mean $\mu$.
\end{proof}

\section{Geometric Median-of-Means and Gradient Descent}\label{sec:gd}

In the previous section, we showed that the coordinate-wise median of the sample means is a $(1+\varepsilon)$-approximation of the true mean and can be computed in linear time in the number of samples. However, one could generate instances where the coordinate-wise median is consistently a worse approximation to the true mean than the geometric median is, and vice-versa. This is shown in Appendix~\ref{app:examples}. Therefore, we prove that the geometric median-of-means estimator also has an optimal sample complexity and give a gradient descent algorithm (Algorithm~\ref{alg:median}) for computing a sufficiently good estimate of it efficiently.  It is important to note that Algorithm 2 does not always converge to the geometric median of the points. Take, for example, the triangle with coordinates $(0,1), (0,-1), (1,0)$. The geometric median is the point $(\frac{1}{\sqrt 3},0)$. If we initialize the algorithm at the origin, the algorithm never makes any progress and stays at the origin.

The gradient of the geometric median objective $\sum_{p \in P} \|p-c\|$ at any point $q$ with respect to some reference point set $P$ is $\nabla(q) = \sum_{p \in P} \nabla_p(q)$, where $\nabla_p(q):= \frac{q-p}{\|q-p\|}$ is the contribution to the gradient from point $p$ (excluding co-located points $q$ to make it well defined).

\begin{algorithm}
\caption{\textsc{FastGD}$(P)$}
    \begin{algorithmic}
        \State Compute $\nu_{\textsc{CWM}} = \textsc{CoordWiseMedian}(\hat \mu_1, \ldots, \hat \mu_{b\log \delta^{-1}})$
        \For{$j = 1, \ldots, T$}
            \State Compute the gradient $\nabla(c_{j-1}) = \sum_{p \in P} \frac{c_{j-1}-p}{\|c_{j-1}-p\|}$
            \State Project all $p_i\in P$ onto $c_{j-1} -\nabla(c_{j-1})$, denoting the projection by $p_{i,c_{j-1}}$
            \State Compute the $1$-dimensional median $c_j$ of the $p_{i,c_{j-1}}$'s along the line $c_{j-1} -\nabla(c_{j-1})$
        \EndFor
        \State Output $c_T$
    \end{algorithmic}
\label{alg:median}
\end{algorithm}

The proof that the geometric median-of-means estimator is optimal is algorithmic, i.e. it will follow as a result of the analysis of Algorithm~\ref{alg:median}. We initialize  Algorithm~\ref{alg:median} with $P=\{\hat{\mu}_1,\ldots ,\hat{\mu}_{b\log \delta^{-1}}\}$  and the coordinate-wise median $\nu_{\textsc{CWM}}$ of $P$ as the starting estimate.
The goal is to prove the following theorem:
\begin{theorem}
\label{thm:grad-descent}
Algorithm \ref{alg:empirical} run with Algorithm~\ref{alg:median} as an aggregation routine outputs a $(1+\varepsilon)$-approximate Euclidean mean with probability $1-\delta$ if $T \geq \gamma$ for some absolute constant $\gamma$. The running time is $O\left(\varepsilon^{-1} \cdot d\log \delta^{-1}\right)$.
\end{theorem}

The proof of Theorem~\ref{thm:grad-descent} is articulated in two main steps. 
First, we show that, given a candidate estimate $c_j$ is very far from $\mu$, the gradient of the geometric median of means evaluated at $c_j$ points in direction of $\mu$.
Second, we must determine a good updated solution along this gradient. Line-sweeps or second-order methods are all possible candidates, but computationally expensive. Instead, we perform and update by selecting the median-of-means along the gradient.
Third, we show that once the algorithm determines a solution close to $\mu$, the gradient updates will remain close thereafter.

\paragraph{Step 1: Fast convergence to close estimates from far ones.}

In order to estimate the mean quickly with gradient descent, we will make use of the following geometric median gradient properties whenever our current estimate $c_j$ is far from $\mu$. 

\begin{lemma}
\label{lem:gradientnorm}
Conditioned on event $\cE$, we have:
\begin{enumerate}
    \item[(a)] For any point $q\in \mathbb{R}^d$ with $\|q-\mu\|>10 r$, then $\left\|\nabla(q)\right\| > \frac{3}{10} \cdot b\log \delta^{-1}$;
    \item[(b)] If, at iteration $j$, Algorithm~\ref{alg:median} chooses a point $c_j$ such that $\|c_j - \mu\| \geq 10 r$, then $\|c_{j+1} - \mu\| \leq \frac{7}{10} \cdot \|c_j-\mu\|$.
\end{enumerate}
\end{lemma}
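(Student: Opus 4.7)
The plan is to exploit the majority concentration from event $\cE$: the at least $\tfrac{7}{10} b \log\delta^{-1}$ ``good'' empirical means in $G$ (those lying in the Euclidean ball of radius $r$ around $\mu$) produce unit gradient terms $\nabla_p(q)$ tightly aligned with the direction $u := (q - \mu)/\|q - \mu\|$, while the at most $\tfrac{3}{10} b\log\delta^{-1}$ adversarial ``bad'' terms contribute at most unit norm each.

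For part (a), I would fix $q$ with $\|q - \mu\| > 10r$ and directly lower bound $\langle \nabla(q), u\rangle$, which in turn lower-bounds $\|\nabla(q)\|$. For each good $p$, the standard angular-radius estimate
\[
    \sin \angle(u, \nabla_p(q)) \leq \frac{\|p - \mu\|}{\|q - \mu\|} \leq \frac{1}{10}
\]
yields $\langle \nabla_p(q), u\rangle \geq \sqrt{99}/10$. Summing the good contributions and subtracting the bad ones (each of magnitude at most $1$) gives the desired strict inequality $\langle \nabla(q), u\rangle > \tfrac{3}{10} b\log\delta^{-1}$.

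For part (b), I would decouple the argument into a median-robustness step and an angle-control step. Write $\ell$ for the search line through $c_j$ in direction $v := -\nabla(c_j)/\|\nabla(c_j)\|$, which is well-defined by part (a). Each good $\hat\mu_i$ projects onto $\ell$ within $r$ of $\proj_\ell(\mu)$ by the $1$-Lipschitzness of orthogonal projection, so since a strict majority of the projections lie in the interval of length $2r$ around $\proj_\ell(\mu)$, the $1$-dimensional median $c_{j+1}$ does too. By Pythagoras,
\[
    \|c_{j+1} - \mu\|^2 \leq r^2 + \sin^2\phi \cdot \|c_j - \mu\|^2,
\]
where $\phi$ is the angle between $v$ and $(\mu - c_j)/\|\mu - c_j\|$. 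To control $\cos\phi$, I would reuse the cone structure: each good term contributes at most $1/10$ to the component of $\nabla(c_j)$ perpendicular to $u$, and each bad at most $1$, giving an upper bound of $|G|/10 + (b\log\delta^{-1} - |G|)$ on $\|\nabla(c_j)_\perp\|$. Combined with the parallel lower bound from part (a), the worst case (where $|G|$ is exactly $\tfrac{7}{10} b\log\delta^{-1}$) yields $\cos\phi \geq 0.73$, hence $\sin^2\phi \leq 0.47$. Plugging in $r^2 \leq \|c_j - \mu\|^2/100$ then gives $\|c_{j+1} - \mu\|^2 \leq (0.01 + 0.47)\, \|c_j - \mu\|^2 < 0.49\, \|c_j - \mu\|^2$, i.e., the claimed contraction by a factor of $0.7$.

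The main obstacle is getting a strong enough lower bound on $\cos\phi$: the naive derivation that divides $\langle \nabla(c_j), u\rangle$ by the trivial estimate $\|\nabla(c_j)\| \leq b\log\delta^{-1}$ yields only $\cos\phi \gtrsim 0.3$, which is far too weak to drive a contraction factor below $1$ in the Pythagorean inequality. The fix is to bound $\|\nabla(c_j)_\perp\|$ directly, exploiting that good-point unit gradients lie in a narrow cone of half-angle $\arcsin(1/10)$ around $u$. Only then do both the parallel and perpendicular components scale as absolute constants times $b\log\delta^{-1}$, so that their ratio at the adversarial worst-case split becomes a usable geometric angle independent of the sample budget.
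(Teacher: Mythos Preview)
Your proposal is correct. Part (a) is essentially the paper's argument: both project onto $u=(q-\mu)/\|q-\mu\|$, use the cone bound $\sin\alpha_i\le 1/10$ for good means, and combine the $\ge\tfrac{7}{10}$ good contributions against the $\le\tfrac{3}{10}$ bad ones.

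Part (b) is correct but follows a genuinely different route. The paper works with the angle $\beta$ between $\nabla(c_j)$ and the \emph{good} gradient $\nabla_G(c_j)$, bounding $\sin\beta\le\|\nabla_B\|/\|\nabla_G\|\le 0.5$ via the law of sines on the triangle $\{c_j,\,c_j-\nabla_G,\,c_j-\nabla\}$. It then introduces an auxiliary point $y=\proj_{\nabla_G\text{-line}}(\mu)$ and chains three triangle-inequality terms, $\|c_{j+1}-\proj_j(y)\|+\|\proj_j(y)-y\|+\|y-\mu\|\le r+\sin\beta\,\|c_j-\mu\|+r\le 0.7\,\|c_j-\mu\|$. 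You instead work with the angle $\phi$ between $\nabla(c_j)$ and $u$ directly, bounding the perpendicular component $\|\nabla_\perp\|\le|G|/10+(N-|G|)$ termwise and combining with the parallel lower bound from (a) to get $\sin^2\phi\le 0.47$, after which a single Pythagorean step on the search line finishes. Your decomposition is arguably cleaner---it avoids the intermediate $\nabla_G$-line and the auxiliary point $y$---at the price of a slightly sharper numeric computation (you need $\sin^2\phi<0.49$ rather than the paper's looser $\sin\beta\le 0.5$). One small caveat: part (a) is stated with a strict inequality $\|q-\mu\|>10r$, so invoking it at the boundary $\|c_j-\mu\|=10r$ to guarantee $\nabla(c_j)\neq 0$ is technically not covered; but your cone estimates hold verbatim with non-strict inequalities, so the gradient is still nonzero and the rest goes through.
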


\begin{proof}
We first show that every vector corresponding to a \emph{good} mean is approximately in the same direction from $q$.
Let $\hat \mu$ be a good sample mean. Consider the triangle $\{q, \mu, \hat \mu \}$ and let $\alpha_i$ and $\beta_i$ be the angle in $q$ and $\hat \mu$ respectively.
We show an upper-bound for $\cos(\alpha_i)$.
Using the law of sines,
    \begin{align*}
        \frac{\sin(\alpha_i)}{\left\| \mu - \hat \mu_i   \right\|} = \frac{\sin(\beta_i)}{\left\|  q-\mu\right\|} \leq  \frac{1}{\left\|  q-\mu\right\|}    
    \end{align*}

We get $\sin (\alpha_i) \leq \frac{\left\| \mu - \hat \mu_i   \right\|}{ \left\|  q-\mu\right\|} \leq \frac{1}{10}$,  which implies $\cos (\alpha_i) \geq \sqrt{1 - \sin^2(\alpha_i)} \geq 0.99$. To bound $\left\|\nabla(q)\right\|$, we consider the contribution of \emph{good} and \emph{bad} means separately. We write  $\nabla(q) = \sum_{i \in [b\log \delta^{-1}]} \nabla_i(q) = \nabla_G(q) + \nabla_B(q)$, where $\nabla_G, \nabla_B$ respectively denote good and bad points contributions to the gradient evaluated at $q$ (see Figure \ref{fig:goodbad}).
\begin{align*}
    \|\nabla_G(q)\|&=\left\|\sum_{i:~\hat\mu_i \text{ good}}\nabla_i(q)\right\| \geq \left\|\sum_{i:~\hat\mu_i \text{ good}} \cos(\alpha_i) \cdot \frac{\mu-q}{\|\mu-q\|}\right\| \geq  \frac{7}{10}\cdot 0.99 \cdot b\log \delta^{-1},
\end{align*}
where the first inequality holds because projecting the gradient onto any direction only decreases the norm. Conversely, the contribution given by bad sample means to the gradient norm is at most 
\begin{align*}
    \|\nabla_B(q)\|&=\left\|\sum_{i:~\hat\mu_i \text{ not good}} \nabla_i(q)\right\|
    \leq \frac{3}{10} \cdot b\log \delta^{-1}
\end{align*}
where the inequality follows from the upper-bound on the number of \emph{bad} means when $\cE$ holds and the triangle inequality.
Part (a) follows by combining both bounds: We have, due to the triangle inequality, an overall lower bound on the norm of the gradient of 
\begin{align*}
    \left\|\sum_{i}\nabla_{i}(q) \right\| &\geq \left\|\sum_{i:~\hat\mu_i \text{ good}}\nabla_{i}(q) \right\| - \left\|\sum_{i:~\hat \mu_i \text{ not good}}\nabla_i(q)\right\| \geq \frac{3}{10} \cdot b\log \delta^{-1}.
\end{align*}

For part (b), similarly to the above, we focus on the triangle $\{q, q-\nabla_G, q-\nabla_G-\nabla_B\}$. Let $\beta$ be the angle in vertex $q$ (see Figure \ref{fig:goodbad}, with $q$ instead of $c_j$) and let $\theta$ be the angle in vertex $q-\nabla_G - \nabla_B$. Using the bounds of $\| \nabla_G \|$ and $\| \nabla_B \|$  and the law of sines, we get
\begin{align*}
    \frac{\sin(\beta)}{\| \nabla_B \|} = \frac{\sin 
    (\theta)}{\| \nabla_G \|} \leq \frac{1}{\| \nabla_G \|},
\end{align*}
which implies that $\sin(\beta) \leq \frac{\| \nabla_B \|}{\| \nabla_G \|} \leq \frac{1}{2}$ .

Next, we upper-bound $\|c_{j+1} - \mu\|$ to prove part (b).
Consider the projection of $\mu$ onto $c_j - \nabla_G(c_j)$ (the green line in Figure~\ref{fig:goodbad}), call it $y$. Now, let us consider the projection of $y$ onto $c_j - \nabla(c_j)$, call it $\proj_j(y)$. Since the good samples are in the majority, the median $c_{j+1}$ of the projected sample means (including both good and bad) lies within the convex hull of the good projected means. This holds because, under any projection, the median must be between two good points (analogously to the one-dimensional median-of-means estimator).
This implies that, for $c_{j+1}$, we have
\begin{align*}
    \|c_{j+1} - \mu\| &\leq \|c_{j+1} - y\| + \|y - \mu\| \leq \underbrace{\|c_{j+1} - \proj_j(y)\|}_{\leq r} + \underbrace{\|\proj_j(y)-y\|}_{=\sin(\beta) \cdot \|c_j-y\|} + \underbrace{\|y - \mu\|}_{\leq r} \\
    &\leq
    \sin(\beta) \cdot \|c_j-\mu\| + 2r \leq \frac{7}{10} \cdot \|c_j-\mu\|,
\end{align*}
where the third inequality follows from the fact that $\|c_j-y\| \leq \|c_j-\mu\|$ (given that $c_j - \mu$ and $c_j - y$ are respectively the hypotenuse and the leg of the right-angle triangle $\{c_j, \mu, y\}$ and $\| c_j - \mu \| \geq 10r$.
\end{proof}

We wish to briefly remark on the sample complexity of the geometric median-of-means estimator and generalization bounds.
Optimality follows from part (a) of Lemma \ref{lem:gradientnorm}, as if the gradient of the geometric median $\nu_{\textsc{GM}}$ is $0$, $\|\nu_{\textsc{GM}}-\mu\|\leq 10r$ must hold.

\begin{corollary}
    \label{cor:alg2}
    For sufficiently large absolute constants $a$ and $b$, the geometric median $\nu_{\textsc{GM}} :=\underset{c\in \mathbb{R}^d}{\arg\min} \sum_{i=1}^{b\log \delta^{-1}}\|\hat\mu_i-c\|$ of $b\log \delta^{-1}$ many sample means consisting of $a\varepsilon^{-1}$ many points is a $(1+\varepsilon)$-approximate mean with probability at least $1-\delta$.
\end{corollary}

\begin{remark}\label{rem:median}
    The geometric median cannot be computed exactly, we can merely approximate it. The quality of the choice of approximation is important here. Using event $\cE$ by itself, it is not clear that even a $2$-approximate geometric median will recover a $(1+O(\varepsilon))$-approximate mean. Indeed, if we assume that a single $\hat\mu_i$ has distance $\sqrt{\frac{\varepsilon\OPT}{\delta n}}$ to the optimal mean, we must compute a  $(1+\delta^{-O(1)})$-approximate median. In Appendix \ref{sec:lb}, we will show that this assumption is warranted for worst case instances. The fastest algorithm for finding a $(1+\gamma)$-approximate geometric median by \cite{cohen2016geometric} runs in time $O(nd\log^3 \gamma^{-1})$. In our setting, $n=b\log \delta^{-1}$. The resulting running time of $O(\log^4 \delta^{-1}\cdot d)$ is thus substantially slower that the other three algorithms presented in this paper.  
\end{remark}

\paragraph{Step 2: Once the estimate is close, it remains close.}

\begin{lemma}
\label{lem:gd-ball}
    Conditioned on event $\cE$, if, at iteration $j$, Algorithm~\ref{alg:median} chooses a point $c_j$ such that $\|c_j - \mu\| \leq 10 r$, then $\|c_{j+1} - \mu\| \leq 11r$.
\end{lemma}

\begin{proof}
    Consider a point $c_j$ chosen by Algorithm~\ref{alg:median} such that $\|c_j-\mu\| \leq 10 r$. Its gradient vector is $\nabla(c_j)$ and Algorithm~\ref{alg:median} projects all other sample means onto $c_j - \nabla(c_j)$ and takes the median to be $c_{j+1}$. For every mean $\hat \mu$, we denote its projection by proj$(\hat \mu)$. We make two observations. First, the set of proj$(\hat \mu)$ of all \emph{good} points lie on a bounded line segment of length at most $2r$ (see Figure \ref{fig:goodbad}). Second, due to the event $\cE$, we have that the median chosen lies within the bounded line segment.

    From the second observation, we have $\|c_{j+1} - \text{proj}(\mu) \| \leq r$, which is the radius of the \emph{good} ball (see Figure \ref{fig:goodbad}).
    Applying triangle inequality, we get
    \begin{align*}
        \|c_{j+1} - \mu\| &\leq \|c_{j+1} - \text{proj}(\mu) \| + \|\text{proj}(\mu) - \mu \|  \leq r + \|c_j - \mu\| \leq 11r, 
    \end{align*}

where the second inequality follows from $\text{proj}(\mu)$ being a projection on a line passing through $c_j$. 
\end{proof}

    \begin{figure}[htbp]

    \centering


        \resizebox{0.3  \linewidth}{!}{

                \tikzset{every picture/.style={line width=0.75pt}} 

\begin{tikzpicture}[x=0.75pt,y=0.75pt,yscale=-1,xscale=1]



    \draw  [color={rgb, 255:red, 255; green, 255; blue, 255 }  ,draw opacity=1 ][line width=3] [line join = round][line cap = round] (148,290) .. controls (148,293.97) and (143.2,302) .. (139,302) ;


    \draw  [color={rgb, 255:red, 126; green, 211; blue, 33 }  ,draw opacity=1 ][line width=1.5]  (187,239) .. controls (187,199.24) and (219.24,167) .. (259,167) .. controls (298.76,167) and (331,199.24) .. (331,239) .. controls (331,278.76) and (298.76,311) .. (259,311) .. controls (219.24,311) and (187,278.76) .. (187,239) -- cycle ;


    \draw  [color={rgb, 255:red, 74; green, 144; blue, 226 }  ,draw opacity=1 ][fill={rgb, 255:red, 74; green, 144; blue, 226 }  ,fill opacity=1 ] (256,236) .. controls (256,234.34) and (257.34,233) .. (259,233) .. controls (260.66,233) and (262,234.34) .. (262,236) .. controls (262,237.66) and (260.66,239) .. (259,239) .. controls (257.34,239) and (256,237.66) .. (256,236) -- cycle ;






    \draw  [color={rgb, 255:red, 126; green, 211; blue, 33 }  ,draw opacity=1 ][fill={rgb, 255:red, 126; green, 211; blue, 33 }  ,fill opacity=1 ] (222,176) .. controls (222,174.34) and (223.34,173) .. (225,173) .. controls (226.66,173) and (228,174.34) .. (228,176) .. controls (228,177.66) and (226.66,179) .. (225,179) .. controls (223.34,179) and (222,177.66) .. (222,176) -- cycle ;


    \draw  [color={rgb, 255:red, 126; green, 211; blue, 33 }  ,draw opacity=1 ][fill={rgb, 255:red, 126; green, 211; blue, 33 }  ,fill opacity=1 ] (205,245) .. controls (205,243.34) and (206.34,242) .. (208,242) .. controls (209.66,242) and (211,243.34) .. (211,245) .. controls (211,246.66) and (209.66,248) .. (208,248) .. controls (206.34,248) and (205,246.66) .. (205,245) -- cycle ;


    \draw  [color={rgb, 255:red, 126; green, 211; blue, 33 }  ,draw opacity=1 ][fill={rgb, 255:red, 126; green, 211; blue, 33 }  ,fill opacity=1 ] (300,249) .. controls (300,247.34) and (301.34,246) .. (303,246) .. controls (304.66,246) and (306,247.34) .. (306,249) .. controls (306,250.66) and (304.66,252) .. (303,252) .. controls (301.34,252) and (300,250.66) .. (300,249) -- cycle ;


    \draw  [color={rgb, 255:red, 126; green, 211; blue, 33 }  ,draw opacity=1 ][fill={rgb, 255:red, 126; green, 211; blue, 33 }  ,fill opacity=1 ] (282,306) .. controls (282,304.34) and (283.34,303) .. (285,303) .. controls (286.66,303) and (288,304.34) .. (288,306) .. controls (288,307.66) and (286.66,309) .. (285,309) .. controls (283.34,309) and (282,307.66) .. (282,306) -- cycle ;


    \draw  [color={rgb, 255:red, 126; green, 211; blue, 33 }  ,draw opacity=1 ][fill={rgb, 255:red, 126; green, 211; blue, 33 }  ,fill opacity=1 ] (266,282) .. controls (266,280.34) and (267.34,279) .. (269,279) .. controls (270.66,279) and (272,280.34) .. (272,282) .. controls (272,283.66) and (270.66,285) .. (269,285) .. controls (267.34,285) and (266,283.66) .. (266,282) -- cycle ;


    \draw    (431,326) -- (262.44,17.63) ;

    \draw [shift={(261,15)}, rotate = 61.34] [fill={rgb, 255:red, 0; green, 0; blue, 0 }  ][line width=0.08]  [draw opacity=0] (8.93,-4.29) -- (0,0) -- (8.93,4.29) -- cycle    ;


    \draw  [fill={rgb, 255:red, 0; green, 0; blue, 0 }  ,fill opacity=1 ] (428,329) .. controls (428,327.34) and (429.34,326) .. (431,326) .. controls (432.66,326) and (434,327.34) .. (434,329) .. controls (434,330.66) and (432.66,332) .. (431,332) .. controls (429.34,332) and (428,330.66) .. (428,329) -- cycle ;




    \draw [color={rgb, 255:red, 126; green, 211; blue, 33 }  ,draw opacity=1 ]   (431,329) -- (169.49,230) ;

    \draw [shift={(167,228)}, rotate = 25.84] [fill={rgb, 255:red, 126; green, 211; blue, 33 }  ,fill opacity=1 ][line width=0.08]  [draw opacity=0] (8.93,-4.29) -- (0,0) -- (8.93,4.29) -- cycle    ;


    \draw [color={rgb, 255:red, 208; green, 2; blue, 27 }  ,draw opacity=1 ]   (434,329) -- (380.5,9.96) ;

    \draw [shift={(380,7)}, rotate = 80.48] [fill={rgb, 255:red, 208; green, 2; blue, 27 }  ,fill opacity=1 ][line width=0.08]  [draw opacity=0] (8.93,-4.29) -- (0,0) -- (8.93,4.29) -- cycle    ;


    
    \draw [line width = 1mm, green]   (388,249) --  (318,124) ;

    \draw[thick, -] (403,275) arc[start angle=270, end angle=182, radius=0.85cm];





    \draw [color={rgb, 255:red, 126; green, 211; blue, 33 }  ,draw opacity=1 ] [dash pattern={on 4.5pt off 4.5pt}]  (225,173) -- (318,124) ;

    \draw [color={rgb, 255:red, 126; green, 211; blue, 33 }  ,draw opacity=1 ] [dash pattern={on 4.5pt off 4.5pt}]  (256,236) -- (349,187) ;

    \draw [color={rgb, 255:red, 126; green, 211; blue, 33 }  ,draw opacity=1 ] [dash pattern={on 4.5pt off 4.5pt}]  (282,308.8) -- (388,243) ;


    \draw  [color={rgb, 255:red, 0; green, 0; blue, 0 }  ,draw opacity=1 ][fill={rgb, 255:red, 0; green, 0; blue, 0 }  ,fill opacity=1 ] (315,124) .. controls (315,122.34) and (316.34,121) .. (318,121) .. controls (319.66,121) and (321,122.34) .. (321,124) .. controls (321,125.66) and (319.66,127) .. (318,127) .. controls (316.34,127) and (315,125.66) .. (315,124) -- cycle ;

    \filldraw[black] (247, 259) circle (2pt);
    \node[anchor=north east] at (247, 259) {$y$};

    \draw [color={black}  ,draw opacity=1 ] [dash pattern={on 2.5pt off 2.5pt}]  (247, 259) -- (256,236) ;


        \filldraw[black] (360, 197) circle (2pt);
    \node[anchor=north west] at (365, 190) {proj($y$)};

    \draw [color={black}  ,draw opacity=1 ] [dash pattern={on 4.5pt off 4.5pt}]  (247, 259) -- (360, 197) ;

    \filldraw[black] (352,185) circle (2pt);
    \node[anchor=north west] at (359, 170.4) {proj$(\mu)$};

    \filldraw[black] (337,154.2) circle (2pt);
    \node[anchor=north west] at (337,140.2) {$c_{j+1}$};

    \filldraw[black] (385,247) circle (2pt);

    \node[anchor=north west] at (225,173) {$(\hat \mu_1)$};

    \node[anchor=north west] at (282,306) {$(\hat \mu_2)$};


    \draw  [color={rgb, 255:red, 208; green, 2; blue, 27 }  ,draw opacity=1 ][fill={rgb, 255:red, 208; green, 2; blue, 27 }  ,fill opacity=1 ] (388,18) .. controls (388,16.34) and (389.34,15) .. (391,15) .. controls (392.66,15) and (394,16.34) .. (394,18) .. controls (394,19.66) and (392.66,21) .. (391,21) .. controls (389.34,21) and (388,19.66) .. (388,18) -- cycle ;


    \draw  [color={rgb, 255:red, 126; green, 211; blue, 33 }  ,draw opacity=1 ][fill={rgb, 255:red, 126; green, 211; blue, 33 }  ,fill opacity=1 ] (231,219) .. controls (231,217.34) and (232.34,216) .. (234,216) .. controls (235.66,216) and (237,217.34) .. (237,219) .. controls (237,220.66) and (235.66,222) .. (234,222) .. controls (232.34,222) and (231,220.66) .. (231,219) -- cycle ;




   \draw  [color={rgb, 255:red, 208; green, 2; blue, 27 }  ,draw opacity=1 ][fill={rgb, 255:red, 208; green, 2; blue, 27 }  ,fill opacity=1 ] (377,39) .. controls (377,37.34) and (378.34,36) .. (380,36) .. controls (381.66,36) and (383,37.34) .. (383,39) .. controls (383,40.66) and (381.66,42) .. (380,42) .. controls (378.34,42) and (377,40.66) .. (377,39) -- cycle ;


    \draw  [color={rgb, 255:red, 208; green, 2; blue, 27 }  ,draw opacity=1 ][fill={rgb, 255:red, 208; green, 2; blue, 27 }  ,fill opacity=1 ] (390,29) .. controls (390,27.34) and (391.34,26) .. (393,26) .. controls (394.66,26) and (396,27.34) .. (396,29) .. controls (396,30.66) and (394.66,32) .. (393,32) .. controls (391.34,32) and (390,30.66) .. (390,29) -- cycle ;


    \draw  [color={rgb, 255:red, 208; green, 2; blue, 27 }  ,draw opacity=1 ][fill={rgb, 255:red, 208; green, 2; blue, 27 }  ,fill opacity=1 ] (370,22) .. controls (370,20.34) and (371.34,19) .. (373,19) .. controls (374.66,19) and (376,20.34) .. (376,22) .. controls (376,23.66) and (374.66,25) .. (373,25) .. controls (371.34,25) and (370,23.66) .. (370,22) -- cycle ;


    \draw [color={rgb, 255:red, 208; green, 2; blue, 27 }  ,draw opacity=1 ] [dash pattern={on 4.5pt off 4.5pt}]  (296,75) -- (377,39) ;


    \draw  [color={rgb, 255:red, 0; green, 0; blue, 0 }  ,draw opacity=1 ][fill={rgb, 255:red, 0; green, 0; blue, 0 }  ,fill opacity=1 ] (290,75) .. controls (290,73.34) and (291.34,72) .. (293,72) .. controls (294.66,72) and (296,73.34) .. (296,75) .. controls (296,76.66) and (294.66,78) .. (293,78) .. controls (291.34,78) and (290,76.66) .. (290,75) -- cycle ;


    \draw (422,340) node [anchor=north west][inner sep=0.75pt]  [font=\large]  {$c_{j}$};


    \draw (240,230.4) node [anchor=north west][inner sep=0.75pt]  [font=\large]  {$\mu $};


    \draw (207,9.4) node [anchor=north west][inner sep=0.75pt]  [font=\large]  {$-\nabla ( c_{j})$};


    \draw (130,252) node [anchor=north west][inner sep=0.75pt]  [font=\large,color={rgb, 255:red, 126; green, 211; blue, 33 }  ,opacity=1 ]  {$-\nabla _{G}( c_{j})$};


    \draw (399,49.4) node [anchor=north west][inner sep=0.75pt]  [font=\large,color={rgb, 255:red, 208; green, 2; blue, 27 }  ,opacity=1 ]  {$-\nabla _{B}( c_{j})$};


    \draw (353,277.4) node [anchor=north west][inner sep=0.75pt]  [font=\large]  {$\beta $};







    \end{tikzpicture}

    }

        \captionsetup{width=\columnwidth} 
        \caption{Good empirical means are represented in green and bad ones in red. The ball of good means centered at $\mu$ has radius $r$. Projection of all the good means lie on the bounded line segment of length at most $2r$.}
    \label{fig:goodbad}
\end{figure}

\begin{proof}[Proof of Theorem \ref{thm:grad-descent}]
    We condition on event $\cE$. 
    In Algorithm \ref{alg:median}, the initial value $\nu_{\textsc{CWM}}$ is taken to be the coordinate-wise median of all the $b \log \delta^{-1}$ sample means. From Theorem \ref{thm:coordwise-median}, we know that $\|\nu_{\textsc{CWM}} - \mu\| \leq \sqrt{5} r$. 
    Using part (b) of Lemma \ref{lem:gradientnorm}, we have that as long as $\|c_j - \mu\| \geq 10 r$, it holds that $\|c_{j+1} - \mu\| \leq \frac{7}{10} \cdot \|c_{j} - \mu\|$.  As the quantity $\|c_{j} - \mu\|$ is decreasing exponentially, it follows that for some absolute constant $\gamma$ there exists an iteration $j \leq \gamma$, such that $\|c_{j} - \mu\| \leq 10 r$. In the next iterations of the algorithm, from Lemma \ref{lem:gd-ball}, we know that for the terminating value of the algorithm, $c_T$ satisfies $\|c_{T} - \mu\| \leq 11 r$, which yields the desired approximation ratio.
    
    Concerning the the running time, we first need to compute $b\log \delta^{-1}$ sample means from $a\varepsilon^{-1}$ many uniform samples, and then, in each of the $\gamma$ iterations, there are $b\log \delta^{-1}$ sample means to project onto the gradient vector, an operation that takes $O(d\log \delta^{-1})$ time, with $d$ being the dimension of the ambient space. To compute the median of all projections in every iteration, as well as finding the initial coordinate-wise median $\nu_{\textsc{CWM}}$ can be done using a linear-time rank select procedure (see \citep[Chapter 9.3]{CormenLRS09}). For each iteration, as well as the initialization, this therefore takes $O(d\log \delta^{-1})$ time. 
    Thus, the overall running time is $O\left(\varepsilon^{-1}\cdot d\log \delta^{-1}\right)$.
\end{proof}

\section{Mean Estimation via Order Statistics}\label{app:minsumselect}

In this section, we leverage order statistics of the candidate means so as to allow for a quicker aggregation of a suitable candidate mean.

\begin{algorithm}
\caption{$\textsc{MinSumSelect}(P,i)$}
    \begin{algorithmic}
    \State{{\textbf{Input:}} Set of points $p_1,\ldots p_{|P|}$, recursion depth $i$}
    \If{$i=0 \textbf{ or } |P|=1$ }
        \State $W\leftarrow P$
    \Else 
        \State Split $P$ arbitrarily into $\sqrt{|P|}$-sized clusters $\{P_1,\ldots, P_{\sqrt{|P|}}\}$
        \State $W\leftarrow  \emptyset$
        \For{each $P_j$}
            \State $W\leftarrow W \cup \{\textsc{MinSumSelect}(P_j, i-1)\}$
        \EndFor
    \EndIf 
    \State Output $\textsc{ComputeWinner}(W)$
    \end{algorithmic}
\label{alg:minsum}
\end{algorithm}

\begin{algorithm}
\caption{\textsc{ComputeWinner}($P$)}
    \begin{algorithmic}
    \For{each $p_j\in P$}
            \State Let $p_j'\in P$ be the $\frac{7}{10} |P|$-closest point to $p_j$  
            \State Compute $D_{j}:= \sum_{p\in P, \|p-p_j\|\leq \|p_j'-p_j\|}\|p-p_j\|$ 
    \EndFor
    \State Output $\arg\min_{p_j\in P} D_j$     
    \end{algorithmic}
\label{alg:winners}
\end{algorithm}

The main result is the following:

\begin{theorem}
\label{thm:alg1}
   Algorithm \ref{alg:empirical} run with Algorithm~\ref{alg:minsum} as an aggregation routine outputs a $(1+\varepsilon)$ approximation with probability $1-\delta$, using a sample of size $O\left(100^i\varepsilon^{-1}\log \delta^{-1}\right)$, and running in time
    \[
    O\left(\left(100^i\varepsilon^{-1}+\log^{2^{-i}} \delta^{-1}\right)d\log \delta^{-1}\right),
    \]
    for any non-negative integer $i$.    
\end{theorem}

The key observation is that a good mean can always be identified via \textsc{ComputeWinner} as an estimate $\hat\mu_j$ minimizing the sum of distances of the $\frac{7}{10}$th means closest to $\hat\mu_j$. In a nutshell, any mean minimizing such a sum must be close to sufficiently many successful estimates.
Unfortunately, a naive implementation of this idea takes time $O(\log^2 \delta^{-1}\cdot d)$, as proven in the following lemma.
\begin{lemma}
\label{lem:timeCW}
\textup{\textsc{ComputeWinner}($P$)} takes time $O(|P|^2\cdot d)$.
\end{lemma}
\begin{proof}
    We compute all pairwise distances between the points in $P$, which takes time $O(|P|^2\cdot d)$. To compute $D_j$, we first have to find the $\frac{7}{10}|P|$-closest point to $p_j$, which takes time $O(|P|)$ with a sufficiently good rank select procedure, see \citep[Chapter 9.3]{CormenLRS09}. Thereafter, summing up all the distances takes time $O(|P|)$ per $p_j\in P$.
\end{proof}

Nevertheless, these scores yield an improved running time by arbitrarily partitioning the estimates into $\sqrt{|P|}$ groups, finding a good estimate in every group via the truncated sum statistic each in time $|P|\cdot d$, for a total running time of $|P|^{3/2}\cdot d$, and then selecting the best estimate via another truncated sum statistic on all estimates returned by the groups in time $|P|$. The final algorithm consists of applying this idea recursively. 

To prove that \textsc{MinSumSelect} outputs a good solution, we require a parameterized notion of a successful empirical mean. We say that a mean $\mu_j$ is $\gamma$-good, if $\|\mu_j-\mu\|\leq \gamma\sqrt{\frac{\varepsilon\OPT}{n}}$. A straightforward application of the Chernoff bound guarantees us that all but a very small fraction of the points are $\gamma$-good. Assuming this, the following lemma determines the quality of the computed solution.

\begin{lemma}
\label{lem:induction}
    Let $P$ be a set of means such that at least $\left(1-\left(\frac{3}{10}\right)^{i+1}\right)$ of the means are $\gamma$-good. Then $\textup{\textsc{MinSumSelect}}(P,i)$ returns a mean that is $5^{i+1}\gamma$-good.
\end{lemma}
We prove this lemma by induction. We will use the following lemma in both the base case and the inductive step.
\begin{lemma}
\label{lem:select1}
    Given a set of means $P$, suppose that at least $\frac{7}{10}|P|$ are $\gamma$-good. Then, the estimate returned by \textup{\textsc{ComputeWinner}($P$)} is $5\gamma$-good.
\end{lemma}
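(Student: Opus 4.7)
The plan is to argue by direct comparison: upper-bound the score $D_j$ of any $\gamma$-good mean, lower-bound the score $D_k$ of any mean that fails to be $5\gamma$-good, and show the former is strictly smaller, so the $\arg\min$ cannot select a bad candidate. Let $r := \sqrt{\varepsilon\OPT/n}$, so $\gamma$-good means $\|\cdot-\mu\|\leq \gamma r$, and let $G\subseteq P$ denote the set of $\gamma$-good means, with $|G|\geq \frac{7}{10}|P|$ by hypothesis. Note that this threshold matches the rank $\frac{7}{10}|P|$ hard-coded in \textsc{ComputeWinner}, which is what makes the constants work out.

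For the upper bound, fix any $p_j \in G$. By the triangle inequality, every element of $G$ lies within $2\gamma r$ of $p_j$, so there are at least $\frac{7}{10}|P|$ points of $P$ within distance $2\gamma r$ of $p_j$. Hence the $\frac{7}{10}|P|$-closest point $p_j'$ satisfies $\|p_j'-p_j\|\leq 2\gamma r$, and every term in the sum defining $D_j$ is at most $2\gamma r$. With $\frac{7}{10}|P|$ terms in the sum (breaking ties deterministically at the boundary), this yields $D_j \leq \tfrac{14}{10}\gamma r \,|P|$.

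For the lower bound, suppose $p_k\in P$ is not $5\gamma$-good, i.e. $\|p_k-\mu\|>5\gamma r$. The reverse triangle inequality gives $\|p-p_k\|> 4\gamma r$ for every $p\in G$. Among the $\frac{7}{10}|P|$ nearest neighbours of $p_k$, at most $|P|-|G|\leq \frac{3}{10}|P|$ can be bad, so at least $\frac{4}{10}|P|$ of them belong to $G$. Each such term contributes more than $4\gamma r$ to $D_k$, giving $D_k > \tfrac{16}{10}\gamma r\, |P|$.

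Combining the two estimates, $D_j \leq \tfrac{14}{10}\gamma r\,|P| < \tfrac{16}{10}\gamma r\,|P| < D_k$ for every $\gamma$-good $p_j$ and every non-$5\gamma$-good $p_k$, so the $\arg\min$ returned by \textsc{ComputeWinner}$(P)$ must be $5\gamma$-good. The main obstacle is not conceptual but arithmetic: the proof hinges on the numerical inequality $\tfrac{7}{10}\cdot 2 < (\tfrac{7}{10}-\tfrac{3}{10})\cdot 4$, coming from the factor $2$ inside the good-ball (triangle inequality) and the factor $4=5-1$ in the reverse triangle inequality for bad candidates. The $5\gamma$ blow-up in the conclusion is exactly the slack needed to create this separation; a smaller constant would make the inequality fail. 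A minor technicality is handling ties in the distance ranking, which can be broken arbitrarily without affecting the argument.
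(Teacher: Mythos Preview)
Your proof is correct and follows essentially the same approach as the paper: both upper-bound the score of any $\gamma$-good mean by $\tfrac{14}{10}\gamma r\,|P|$ and use pigeonhole to guarantee at least $\tfrac{4}{10}|P|$ good means among the $\tfrac{7}{10}|P|$ nearest neighbors of any candidate. The only cosmetic difference is that the paper argues forward---the winner's small score forces some good neighbor within $\tfrac{7}{2}\gamma r$, giving distance to $\mu$ at most $\tfrac{9}{2}\gamma r$---whereas you take the contrapositive and lower-bound the score of a non-$5\gamma$-good candidate by $\tfrac{16}{10}\gamma r\,|P|$; the paper's direction yields the slightly sharper constant $\tfrac{9}{2}$.
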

\begin{proof}
First, let $\hat\mu'\in P$ be any $\gamma$-good estimator. We know that the $\frac{7}{10}|P|$-closest estimators to $\hat\mu'$ have distance at most $2\gamma \sqrt{\frac{\varepsilon\OPT}{n}}$, which likewise implies that $\min_{\mu_j\in P} D_j\leq \frac{7}{10}|P| \cdot 2 \gamma \sqrt{\frac{\varepsilon\OPT}{n}}$. 
Now suppose that $\hat\mu_j$ is the estimator returned by the algorithm. Let $G(\hat\mu_j)$ be the set of $\gamma$-good estimators among the $\frac{7}{10}|P|$ closest to $\hat\mu_j$. By assumption, we have $|G(\hat\mu_j)|\geq \frac{4}{10}|P|$, which implies
\begin{align*}
    \sum_{\hat\mu_k\in G(\hat\mu_j)}\|\hat\mu_j - \hat\mu_k\| &\leq \frac{7}{10}  |P| \cdot 2\gamma \sqrt{\frac{\varepsilon\OPT}{n}} = \frac{7}{5}\gamma |P|  \sqrt{\frac{\varepsilon\OPT}{n}},
\end{align*}
which gives
\[
    \min_{\hat\mu_k\in G(\hat\mu_j)}\|\hat\mu_j-\hat\mu_k\|\leq \frac{7}{2} \gamma \sqrt{\frac{\varepsilon\OPT}{n}}.
\]
Therefore, 
\begin{align*}
    \|\hat\mu_j - \mu\|&\leq \min_{\hat\mu_k\in G(\hat\mu_j)}\|\hat\mu_j-\hat\mu_k\| + \|\hat\mu_k-\mu\| \leq \frac{7}{2}\gamma \sqrt{\frac{\varepsilon\OPT}{n}} + \gamma \sqrt{\frac{\varepsilon\OPT}{n}} \leq 5 \gamma \sqrt{\frac{\varepsilon\OPT}{n}}. \qedhere
\end{align*}
\end{proof}

\begin{proof}[Proof of Lemma~\ref{lem:induction}]

We proceed with the induction starting from $i=0$.\\

\noindent\textbf{Base Case.} For the base case $i=0$, \textsc{MinSumSelect}($P$,$0$) only calls \textsc{ComputeWinner}. Thus, this case holds due to Lemma~\ref{lem:select1}.\\

\noindent\textbf{Inductive Step.} Let $P_1,\ldots P_{\sqrt{|P|}}$ be the clusters of $P$ computed when first calling $\textsc{MinSumSelect}(P,i)$. By assumption, we have at most $\left(\frac{3}{10}\right)^{i+1}|P|$ means that are not $\gamma$-good. This implies that the number of clusters with more than $\left(\frac{3}{10}\right)^{i}\sqrt{|P|}$ means that are not $\gamma$-good is at most 
$$\frac{\left(\frac{3}{10}\right)^{i+1}|P|}{\left(\frac{3}{10}\right)^{i}\sqrt{|P|}} = \frac{3}{10} \sqrt{|P|}.$$
Denote this set by $B(P)$ and let $G(P)$ be the remaining clusters.
For each $P_j\in B(P)\cup G(P)$, let $\hat\mu_j$ returned by \textsc{MinSumSelect}($P_j$,$i-1$).
If $P_j\in G(P)$, we may use the inductive hypothesis which states that the mean $\hat\mu_j$ returned by \textsc{MinSumSelect}($P_j$,$i-1$) is $5^{i}\gamma$-good. 
Since at least $\frac{7}{10}\sqrt{|P|}$ of the thus computed means are $5^i\gamma$-good, we may use Lemma \ref{lem:select1} which shows that the final mean returned by $\textsc{ComputeWinner}(\cup_{P_j\in B(P)\cup G(P)} \{\hat \mu_j\})$ is $5^{i+1}\gamma$-good, which concludes the proof.
\end{proof}

To conclude the proof, we require two more arguments. First, we must show that, for an appropriate choice of $a$ and $b$, at least a $\left(1-\left(\frac{3}{10}\right)^{i+1}\right)$ fraction of the means are $1$-good with probability at least $1-\delta$, allowing us to use Lemma \ref{lem:induction}.
Second, we will argue the running time.
The first is a simple application of the Chernoff bound.

\begin{lemma}
\label{lem:chernoff}
    For $a\geq 2\cdot 25^{i+1} \cdot\left(\frac{10}{3}\right)^{i+1}$ and $b=3$, at least $\left(1-\left(\frac{3}{10}\right)^{i+1}\right)$ of the means are $5^{-(i+1)}$-good with probability $1-\delta$.
\end{lemma}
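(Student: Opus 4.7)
The plan is to prove this by the usual two-step recipe: a per-estimator tail bound coming from Lemma~\ref{lem:inaba} and Markov's inequality, followed by a Chernoff concentration on the number of ``bad'' estimators across the $b\log\delta^{-1}$ independent subsamples.

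First, I would apply Lemma~\ref{lem:inaba} to each empirical mean $\hat\mu_j$ separately. Since $|S_j| = a\varepsilon^{-1}$, this gives
$\mathbb{E}[\|\hat\mu_j - \mu\|^2] = \frac{\OPT}{|S_j|\,n} = \frac{\varepsilon\OPT}{a\,n}.$
The event that $\hat\mu_j$ fails to be $5^{-(i+1)}$-good is exactly $\{\|\hat\mu_j-\mu\|^2 > 25^{-(i+1)}\varepsilon\OPT/n\}$, and Markov's inequality bounds its probability by $25^{i+1}/a$. Plugging in $a \geq 2\cdot 25^{i+1}\cdot(10/3)^{i+1}$ reduces this to $\tfrac{1}{2}\cdot(3/10)^{i+1}$. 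Denote $q := (3/10)^{i+1}$, so each estimator is bad independently with probability at most $q/2$.

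Second, let $X_j$ be the indicator that $\hat\mu_j$ is bad and $X = \sum_{j=1}^{b\log\delta^{-1}} X_j$. The $X_j$'s are independent because the subsamples $S_j$ are drawn independently, so $X$ is a sum of independent Bernoullis with $\mathbb{E}[X] \leq \tfrac{q}{2}\,b\log\delta^{-1}$. The failure event ``fewer than a $(1-q)$ fraction of the means are good'' is precisely $\{X \geq q\,b\log\delta^{-1}\} = \{X \geq 2\mathbb{E}[X]\}$, so a standard multiplicative Chernoff bound of the form
$\mathbb{P}[X \geq (1+\lambda)\mathbb{E}[X]] \leq \exp\!\bigl(-\lambda^2\mathbb{E}[X]/(2+\lambda)\bigr)$
with $\lambda = 1$ yields a tail of at most $\exp(-\mathbb{E}[X]/3)$.

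The only non-routine part is checking that this Chernoff tail is bounded by $\delta$ with the stated choice of $b$: the key inequality amounts to $\mathbb{E}[X] \geq 3\log\delta^{-1}$, which I would verify from $b\geq 3$ together with the relation $a\,q \cdot b = \Theta(1)$ implicit in the choice of constants (more concretely, one can simply use the upper bound $\mathbb{P}[X_j=1]\leq q/2$ and invoke the stronger Chernoff--Hoeffding form when $q$ is very small, or equivalently a union bound over $X_j=1$). This is the one spot where constants matter, but the argument is entirely mechanical once Markov has produced the per-estimator bound.
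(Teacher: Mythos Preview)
Your two-step plan---Markov via Lemma~\ref{lem:inaba} to bound the per-estimator failure probability by $25^{i+1}/a\leq\tfrac12(3/10)^{i+1}$, then a multiplicative Chernoff bound on the number of bad estimators---is exactly what the paper does. You are right that the final constants check is the only delicate spot; in fact the paper's own tail bound $\exp{-\tfrac{25^{i+1}}{3a}\,b\log\delta^{-1}}$ is not bounded by $\delta$ with $b=3$ either (one needs $b$ of order $(10/3)^{i+1}$), so neither argument verifies the constants exactly as stated, though this is immaterial to the asymptotic sample complexity in Theorem~\ref{thm:alg1}.
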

\begin{proof}
Let $X_i = \begin{cases}1 & \text{if }\mu_i \text{ is not } 5^{-(i+1)}-\text{good}\\
0 &\text{if } \mu_i \text{ is } 5^{-(i+1)}-\text{good}\end{cases}$. 
We have $\mathbb{E}[\|\hat{\mu}_i-\mu\|^2] = \frac{1}{a}\cdot  \frac{\varepsilon\OPT}{n}$ due to Lemma \ref{lem:inaba}. This implies due to Markov's inequality that $\mathbb{P}[X_i=1] \leq  \frac{25^{i+1}}{a}$ which, by our choice of $a$, is less than $\frac{1}{2}\cdot \left(\frac{3}{10}\right)^{i+1}$. Thus, by the Chernoff bound
\begin{align*}
    \mathbb{P}\left[\sum_{i=1}^{b\log \delta^{-1}} X_i \geq  \left(\frac{3}{10}\right)^{i+1} b\log \delta^{-1}\right] &\leq \mathbb{P}\left[\sum_{i=1}^{b\log \delta^{-1}} X_i \geq  2 \cdot \frac{25^{i+1}}{a}\cdot b\log \delta^{-1}\right] \\
    &\leq \exp{-\frac{25^{i+1}}{3a}\cdot b\log\delta^{-1}},
\end{align*}
which is at most $\delta$ by our choice of $b$.
\end{proof}

The approximation guarantee now follows from Lemma \ref{lem:induction} and Lemma \ref{lem:chernoff}. What remains to be shown is the running time. 
Since we split a collection of $t$ means into $\sqrt{t}$ clusters, the running time consists of the time required to recursively run the algorithm on the $\sqrt{t}$ clusters each of size $\sqrt{t}$ and consolidating via \textsc{ComputeWinner}, which takes $O(t\cdot d)$ time.
Thus, starting with an instance of size $|P_0|\in O(\log \delta^{-1})$, we can solve for the recursion
$$T(|P|)=\begin{cases}\sqrt{|P|}\cdot T(\sqrt{|P|}) + |P|\cdot d & \text{if } |P|\geq |P_0|^{2^{-i}}\\
|P|^{2}\cdot d & \text{if } |P|\leq |P_0|^{2^{-i}} \end{cases},$$
which yields the desired running time $O(i\cdot d\log \delta^{-1}+ (\log \delta^{-1})^{1+2^{-i}}\cdot d) = O((\log \delta^{-1})^{1+2^{-i}}\cdot d)$. 

We next formalize this and complete the proof of Theorem~\ref{thm:alg1}.
\begin{proof}[Proof of Theorem~\ref{thm:alg1}]
Throughout this proof, assume $a\geq 2 \cdot 25^{i+1}\left(\frac{10}{3}\right)^{i+1}$ and $b\geq 3$.

We first argue correctness, then running time.
Let $P$ denote the entire set of means passed to the \textsc{MinSumSelect} algorithm.
By Lemma \ref{lem:induction} and Lemma \ref{lem:chernoff} and with our choices of $a$ and $b$, $\textsc{MinSumSelect}(P,i)$ returns a $1$-good mean, that is a $(1+\varepsilon)$-approximate mean with probability at least $1-\delta$.

What remains to be shown is the running time. Denote by $|P|_0 = b\log\delta^{-1}$ the initial set of sample means.
$\textsc{MinSumSelect}(P,i)$ computes $\sqrt{|P|}$ children, each of which recursively calls \textsc{MinSumSelect}. Consolidating via \textsc{ComputeWinner} takes time $O(\sqrt{|P|}^2\cdot d) = O(|P|\cdot d)$ due to Lemma \ref{lem:timeCW}. Thus the overall recursion takes time
$$T(|P|)=\begin{cases}\sqrt{|P|}\cdot T(\sqrt{|P|}) + |P|\cdot d & \text{if } |P|\geq |P|_0^{2^{-i}}\\
|P|^{2}\cdot d & \text{if } |P|\leq |P|_0^{2^{-i}} \end{cases},$$
which solves to a running time of $O(i\cdot |P|_0\cdot d+ |P|_0^{1+2^{-i}}\cdot d) = O((b\log \delta^{-1})^{1+2^{-i}}\cdot d)$. The time to compute the initial set of means is $O\left(a\varepsilon^{-1}\cdot bd\log \delta^{-1}\right)$, which with our choice of $a$ and $b$, and some overestimation of the constants, leads to an overall running time of $O\left(\left(100^{i}\varepsilon^{-1} + \left(\log \delta^{-1}\right)^{2^{-i}}\right)\cdot d\log \delta^{-1}\right)$.
\end{proof}

\begin{remark}
    For any constant choice of recursion depth, the sample complexity only increases by constants. We did not attempt to optimize the constants, but the exponential dependency on $i$ is not avoidable. If we were to prioritize the running time over the sample complexity, we can set a recursion depth of $i=\log\log\log \delta^{-1}$, which achieves a running time and sample complexity of $O(\varepsilon^{-1}\log\delta^{-1} \poly\left(\log\log\delta^{-1}\right)\cdot d)$.

    Furthermore, if the recursion depth is shallow (i.e., for small values of $i$), the recursion can be improved via a better choice of cluster size. Specifically, in the case $i=1$,  that is with just one set of children, Theorem \ref{thm:alg1} yields a running time of $O\left((\varepsilon^{-1}+\sqrt{\log \delta^{-1}})\cdot d\log \delta^{-1}\right)$. If we instead choose $\left(\log \delta^{-1}\right)^{2/3}$ many clusters, each consisting of $\sqrt[3]{\log \delta^{-1}}$ many estimators, we obtain a running time of $O\left((\varepsilon^{-1}+\sqrt[3]{\log \delta^{-1}})\cdot d\log \delta^{-1}\right)$. Similar improvements for other values $i$ are also possible, but these improvements become increasingly irrelevant compared to the bounds given in Theorem \ref{thm:alg1} as $i$ gets larger. 
\end{remark}

\section{Generalization Bounds}

We place our results in the context of generalization bounds for clustering problems. In this setting, we are given an arbitrary but fixed distribution $\mathcal{D}$ supported on the unit Euclidean ball $B_2^d$. The cost of a point $c$ with respect to $\mathcal{D}$ is defined as $\text{cost}_{\mathcal{D}}(c)=\int_{p\in B_2^d} \|p-c\|^2\cdot \mathbb{P}[p]dp$. The risk is defined as $\mathcal{R}:=\underset{c}{\text{ argmin }}\text{cost}_{\mathcal{D}}(c)$. Given independent samples $S$ drawn from $\mathcal{D}$, we wish to compute an estimate $\hat{c}$ with cost $\hat{\mathcal{R}} = \text{cost}_{\mathcal{D}}(\hat{c})$ such that the excess risk $\hat{\mathcal{R}}-\mathcal{R}$ is minimized. The cost of a distribution $\mathcal{D}$ is in a limiting sense the average cost of a point set with all points living in $B_2^d$. Since the average cost is at most the squared radius (i.e. $1$), obtaining a $(1+\varepsilon)$ approximate solution also yields a solution that has excess risk of at most $\frac{\varepsilon\OPT}{n}\leq \varepsilon$, which we then rewrite in terms of the sample size $|S|$ as $O(\frac{\log \delta^{-1}}{|S|})$.  This discussion is summarized in the following corollary.

\begin{corollary}
\label{cor:generalization}
        Given a set of independent samples $S$ drawn from some underlying arbitrary but fixed distribution supported on $B_2^d$, the geometric median-of-means estimator has an excess risk for the least squared distances objective of 
        $$\hat{\mathcal{R}} -\mathcal{R} \leq \gamma\cdot \frac{\log \delta^{-1}}{|S|}$$
        with probability $1-\delta$ for some absolute constant $\gamma>0$.
\end{corollary}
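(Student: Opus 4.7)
The plan is to reduce Corollary \ref{cor:generalization} to Corollary \ref{cor:alg2}, treating $\mathcal{D}$ as the ``continuous'' analogue of a finite point set $A$. The key observation is that none of the ingredients used to establish Corollary \ref{cor:alg2} (Lemmas \ref{lem:inaba}, \ref{lem:magic}, \ref{lem:eventE}, \ref{lem:gradientnorm}, \ref{lem:gd-ball}, and \ref{lem:initialization}) actually require $A$ to be finite: each relies only on second-moment identities (which carry over verbatim to distributions), Chernoff-type concentration for i.i.d.\ samples, and elementary planar geometry. Replacing $A$ by $\mathcal{D}$, the ratio $\OPT/n$ by the population variance $\sigma^2 := \mathbb{E}_{p \sim \mathcal{D}}\|p-\mu_{\mathcal{D}}\|^2$ with $\mu_{\mathcal{D}} := \mathbb{E}_{p \sim \mathcal{D}}[p]$, and each $\hat{\mu}_i$ by the sample mean of a fresh i.i.d.\ batch of $a\varepsilon^{-1}$ draws from $\mathcal{D}$, the proof of Corollary \ref{cor:alg2} yields an estimator $\hat{c}$ satisfying $\|\hat{c} - \mu_{\mathcal{D}}\|^2 \leq \varepsilon \cdot \sigma^2$ with probability at least $1-\delta$, using $|S| = a\varepsilon^{-1} \cdot b\log \delta^{-1}$ samples in total.

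Next, I would translate Lemma \ref{lem:magic} into a distributional bias-variance identity: expanding $\|p-\hat{c}\|^2 = \|p-\mu_{\mathcal{D}}\|^2 + \|\mu_{\mathcal{D}} - \hat{c}\|^2 - 2(p-\mu_{\mathcal{D}})^{\top}(\mu_{\mathcal{D}}-\hat{c})$ and integrating against $\mathcal{D}$ kills the cross term, giving $\mathrm{cost}_{\mathcal{D}}(\hat{c}) = \mathcal{R} + \|\hat{c} - \mu_{\mathcal{D}}\|^2$. Combining with the previous bound yields
\[
    \hat{\mathcal{R}} - \mathcal{R} \;\leq\; \varepsilon \cdot \sigma^2.
\]
Since $\mathcal{D}$ is supported on $B_2^d$, we have $\sigma^2 = \mathbb{E}\|p\|^2 - \|\mu_{\mathcal{D}}\|^2 \leq 1$, so $\hat{\mathcal{R}} - \mathcal{R} \leq \varepsilon$. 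Setting $\varepsilon = ab \log \delta^{-1}/|S|$ (inverting the sample-complexity relation) delivers the claim with $\gamma = ab$.

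The only genuine obstacle is a verification task: confirming that the ``$A \to \mathcal{D}$'' translation of Corollary \ref{cor:alg2} is valid. This is largely cosmetic, since the only probabilistic arguments used are Markov's inequality applied to squared distances and a Chernoff bound on a sum of i.i.d.\ indicators for being a \emph{good} sample mean, both of which apply unchanged to i.i.d.\ draws from $\mathcal{D}$. A fully rigorous alternative is to approximate $\mathcal{D}$ by the empirical distribution of a very large auxiliary sample, invoke Corollary \ref{cor:alg2} on that discrete proxy, and pass to the limit; the constants are preserved.
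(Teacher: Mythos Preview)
Your proposal is correct and follows essentially the same route as the paper: both translate the $(1+\varepsilon)$-approximation guarantee of Corollary~\ref{cor:alg2} into an excess-risk bound via the bias--variance identity (Lemma~\ref{lem:magic}), use $\sigma^2\leq 1$ from the $B_2^d$ support assumption, and then invert the sample-complexity relation $|S|=\Theta(\varepsilon^{-1}\log\delta^{-1})$ to obtain $\varepsilon=\Theta(\log\delta^{-1}/|S|)$. The paper handles the passage from finite $A$ to a distribution $\mathcal{D}$ in one sentence (``in a limiting sense the average cost of a point set''), whereas you spell out explicitly why each probabilistic ingredient survives the translation; this extra care is welcome but does not constitute a different argument.
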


Note that the learning rate given by Corollary \ref{cor:generalization} stands in contrast to learning rates that are achievable for other center-based problems such as indeed the geometric median with objective \[\text{cost}_{\mathcal{D}}(c)=\int_{p\in B_2^d} \|p-c\|\cdot \mathbb{P}[p]dp,\] or for the $k$-means problem with objective \[\text{cost}_{\mathcal{D}}(C)=\int_{p\in B_2^d} \min_{c\in C}\|p-c\|^2\cdot \mathbb{P}[p]dp\] for a $k$-center set $C$. As mentioned in the related work section, all of these objectives require learning rates of at least $\sqrt{\frac{1}{|S|}}$, ignoring problem specific parameters.

\section{Lower Bounds}\label{sec:lb}

We complement our algorithmic results with matching lower bounds.

\subsection{A High Probability Lower Bound for Mean-Estimation}

\begin{theorem}
\label{sec4.1_lem:eventE}
    Any sublinear algorithm that outputs a $(1+\varepsilon)$-approximate Euclidean mean with probability at least $1-\delta$ must sample at least $\Omega(\varepsilon^{-1}\log \delta^{-1})$ many points.
\end{theorem}
The idea is to generate two instances that a sublinear algorithm for approximating means has to distinguish between and then bound the number of samples required to distinguish between such distributions. The first instance is virtually identical to the one used by \cite{Cohen-AddadSS21}. It places $n$ points at $0$ and $\varepsilon n$ points at $1$. Thus, the optimal mean is $\frac{\varepsilon}{1+\varepsilon}$. A routine calculation via Lemma \ref{lem:magic} shows that the optimal cost is $\frac{\varepsilon n}{1+\varepsilon}$. The second instance places all points at $0$. Since $0$ is not a sufficiently good approximate mean for the first instance, a sublinear algorithm can distinguish between the two instances, which yields a lower bound on the sample complexity.

The two instances lie within the unit Euclidean ball. By normalizing the number of points, it also yields a distribution supported on the unit Euclidean ball, which implies that the generalization bounds given in Corollary \ref{cor:generalization} are sharp.
\begin{proof}
We give two instances. The first instance places $n$ points at $0$ and $\varepsilon n$ points at $1$. Thus, the optimal mean is placed at $\frac{\varepsilon}{1+\varepsilon}$. A routine calculation via Lemma \ref{lem:magic} shows that the optimal cost is $\OPT=\frac{\varepsilon n}{1+\varepsilon}$. The second instance places all points at $0$.

First we argue that a sublinear algorithm can distinguish between these two instances. Any approximate mean for the second instance must output $0$. If we output $0$ for the first instance, we incur a cost of $\varepsilon n = \frac{\varepsilon n (1+\varepsilon)}{1+\varepsilon} = \OPT + \varepsilon\cdot \OPT$, hence any algorithm improving over a $(1+\varepsilon)$ approximation for the former cannot output $0$. 
Thus, a necessary condition to distinguish between the two instances is for the algorithm to sample at least one point at $1$.

Suppose we sample $m$ points. Our goal is to show that for $m\in \Omega(\varepsilon^{-1}\log \delta^{-1})$, the probability that all of the sampled points are drawn from $0$ is less than $\delta$ for the first instance. Let $X_i$ denote the indicator variable of the $i$th sampled point. We have
\begin{align*}
    \delta &\geq \mathbb{P}[\forall~i,~ X_i = 0] = \mathbb{P}[X_1=0]^m = \left(\frac{1}{1+\varepsilon}\right)^m \\
    &= \exp{m\ln \frac{1}{1+\varepsilon}} = \exp{-m\ln(1+\varepsilon)} \\
    &> \exp{-m\varepsilon},
\end{align*}
where the final inequality follows from the Mercator series $\ln(1+\varepsilon) = \sum_{i=1}^{\infty} \left(\frac{\varepsilon^i}{i}\right)\cdot (-1)^{i+1}$. Thus, for any $m < \varepsilon^{-1}\cdot \log \delta^{-1}$, we do not achieve the desired failure probability. Conversely, $m \in \Omega(\varepsilon^{-1}\log \delta^{-1})$ for an algorithm to succeed.
\end{proof}

\subsection{The Empirical Mean is not a Good Estimator}
We briefly show that the arguably most natural algorithm that outputs the empirical mean of a subsampled point set has a substantial increase in sample complexity in the high success probability regime and therefore also a substantial increase in running time compared to the algorithms presented in this paper. The result is folklore, but included for completeness.

\begin{theorem}
\label{sec4.2_lem:eventE}
   For all $\varepsilon, \delta \in (0,1)$, there exists an instance such that $\Omega(\varepsilon^{-1}\delta^{-1})$ independently sampled points are required for the empirical mean to be a $(1+\varepsilon)$-approximate Euclidean mean with probability at least $1-\delta$.
\end{theorem}

The proof is based on reducing the problem of finding a good approximate mean with high probability to giving an example distribution for which the Chebychev inequality is tight.

\begin{proof}
Consider a sample $|S|$, with $|S|$ being specified later. We generate an instance as follows. We place a $\frac{1}{2|S|^2\varepsilon}$-fraction of the points each at $-|S|\sqrt{\varepsilon}$ and $|S|\sqrt{\varepsilon}$ and the remaining $1-\frac{1}{|S|^2\varepsilon}$ fraction at $0$. Then the optimal solution places the mean at $0$ and has average cost $(|S|\sqrt{\varepsilon})^2\cdot \frac{1}{|S|^2\varepsilon} = 1$. By Lemma \ref{lem:magic}, this implies that the empirical mean $\hat{\mu} = \frac{1}{|S|}\sum_{p\in S}p$ is a better than $(1+\varepsilon)$-approximate mean if and only if $|\hat{\mu}|< \sqrt{\varepsilon}$. We set the failure probability, that is the probability $\mathbb{P}[|\hat{\mu}|\geq \sqrt{\varepsilon}]=\delta$. Then $\mathbb{P}[|\hat{\mu}|\geq \sqrt{\varepsilon}]$ is at least the probability that we sample exactly one point from either $-|S|\sqrt{\varepsilon}$ or $|S|\sqrt{\varepsilon}$ and the remaining $|S|-1$ points from $0$. Using Bernoulli's inequality and the density function of the binomial distribution, we therefore have
$$\delta = \mathbb{P}[|\hat{\mu}|\geq \sqrt{\varepsilon}] \geq |S| \cdot \frac{1}{|S|^2\varepsilon}\cdot \left(1-\frac{1}{|S|^2\varepsilon}\right)^{|S|-1}\geq \frac{1}{|S|\varepsilon}\cdot \left(1-\frac{1}{|S|\varepsilon}\right).$$
Solving $\frac{1}{|S|\varepsilon}\cdot \left(1-\frac{1}{|S|\varepsilon}\right) \leq \delta$ for $|S|$ then yields $|S|\in \Omega(\varepsilon^{-1}\delta^{-1})$, as desired.
\end{proof}

Although the empirical mean is the fastest among all mean estimators (it can be computed in closed form), the above construction shows that it is not a robust estimator of the true mean—even in one dimension. Indeed, for heavy-tailed distributions, the best available tail bound is Chebyshev’s inequality, which is exponentially weaker. In fact, \cite{Cat12} (similarly to the construction above) shows that there exist distributions for which Chebyshev’s inequality is tight. 
In contrast, the empirical mean performs well when estimating the mean of subgaussian distributions, where the associated tail bounds are sharply exponential. In practice, data distributions are often not as adversarial as those constructed to demonstrate the failure of the empirical mean (e.g., \cite{Cat12}). This explains why the empirical mean is frequently observed to perform well in real-world scenarios, both in terms of computational efficiency and estimation accuracy.

\section{Experimental Evaluation}\label{sec:experiments}

\subsection{Setup}

We conclude an experimental evaluation of our proposed algorithms. Code base and results can be found at \url{https://github.com/matteorusso/sublinear_mean_estimation}.
The experiments were carried out on the Google Colab default CPU.
Our goal is to understand the practical viability of various estimators, such as the three optimal estimators included here, as well as the previous results from \cite{Cohen-AddadSS21,woodruff2024coresets} and the empirical mean of the samples.
In particular, we wish to understand if the gradient updates from Algorithm \ref{alg:median} improve over an initialization given by the coordinate-wise median. It is not difficult to generate instances where one of these algorithms outperforms the other, but ultimately the empirically best estimator can only be determined via experiments on real-world data.

\paragraph{Algorithms.}
The reference algorithms for fast sublinear mean estimation is the $1$-means coreset-based \textsc{CSS} algorithm \cite{Cohen-AddadSS21}, the active-regression coreset-based \textsc{WY} algorithm \cite{woodruff2024coresets}, and our Algorithms \ref{alg:median} and \ref{alg:minsum}. In addition, we also used the standard empirical mean as a baseline. Each of these algorithms are given a set of $m$ points sampled uniformly at random from the underlying point set and compared with respect to running time and quality of the solution. We implement both \textsc{MinSumSelect}and  \textsc{FastGD}. As described in Section~\ref{sec:gd}, we take as initial guess the coordinate-wise median of the computed sample means. We also simply compare against the coordinate-wise median initialization step of \textsc{FastGD}, referred to as \textsc{CoordWiseMedian}.

\paragraph{Datasets.} We test our algorithms against the benchmarks mentioned above on the following datasets: MNIST and Fashion-MNIST, both of which are composed of 60,000 points, each with 784 features. We also consider CoverType, composed of 581,012 points, each with 54 features. 

\begin{table}[ht]
    \centering
    \begin{tabular}{|l|c|c|c|c|c|c|}
    \hline
    Dataset & Shape & Mean & Std-Dev & Min & Max \\
    \hline
    MNIST & (60,000; 784) & 0.1306 & 0.3081 & 0.0000 & 1.0000 \\
    Fashion-MNIST & (60,000; 784) & 0.2860 & 0.3530 & 0.0000 & 1.0000  \\
    CoverType & (581,012; 54) & 0.4567 & 0.4981 & 0.0000 & 1.0000  \\
    \hline
    \end{tabular}
    \caption{Summary statistics for datasets.}
    \label{tab:dataset_summary}
\end{table}

\subsection{Results}

We next plot the accuracy vs. sample size and runtime vs. sample size, as well as the accuracy vs. dimension and runtime vs. dimension.

\subsubsection{Accuracy and Runtime vs. Sample Size} 

We plot how accuracy (i.e., $\nicefrac{\ALG}{\OPT}$) and runtime vary as a function of the sample size for the various algorithms. First, we plot this relationship in log-scale (Figures \ref{fig:mnist-log}-\ref{fig:ctype-log}). For each sample size $m \in \{10, 15, 20, 25, 30, 100, 200, 500, 1000, 2000, 5000, 10000\}$, we repeat the execution of every algorithm $50$ times and report averages and variances across runs. 

\begin{figure}[htbp]
    \centering
    \begin{subfigure}{0.45\textwidth}
        \centering
        \includegraphics[width=\textwidth]{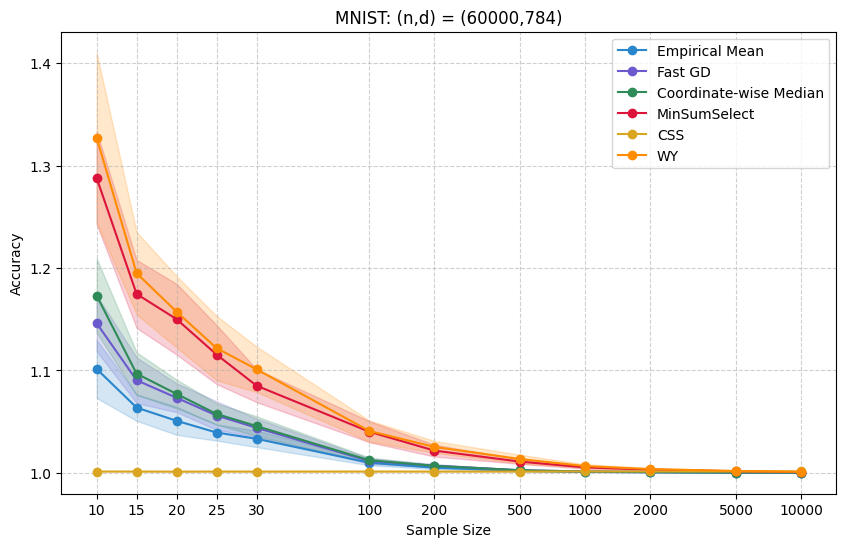}
        \caption{Accuracy vs. sample size (in log-scale)}
    \end{subfigure}
    \hfill
    \begin{subfigure}{0.45\textwidth}
        \centering
        \includegraphics[width=\textwidth]{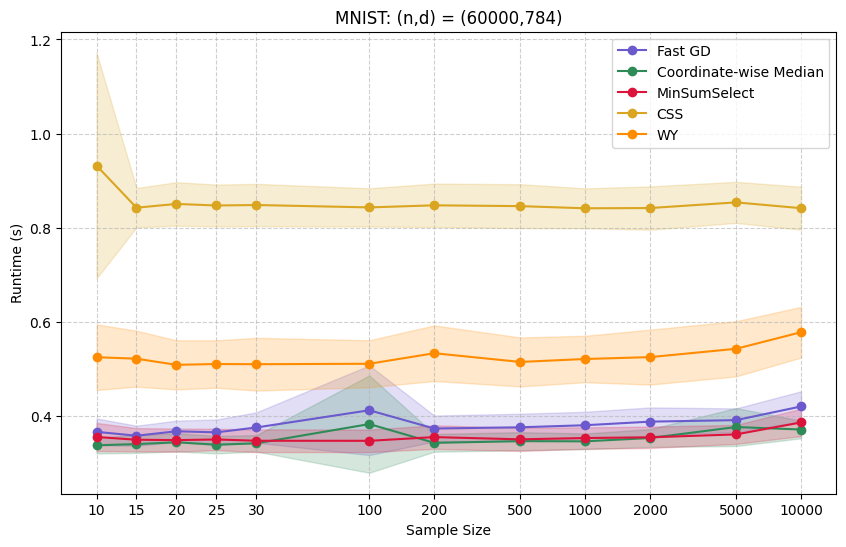}
        \caption{Runtime vs. sample size}
    \end{subfigure}
    \caption{MNIST Dataset: Accuracy and runtime against sample size.}
    \label{fig:mnist-log}
\end{figure}

\begin{figure}[htbp]
    \centering
    \begin{subfigure}{0.45\textwidth}
        \centering
        \includegraphics[width=\textwidth]{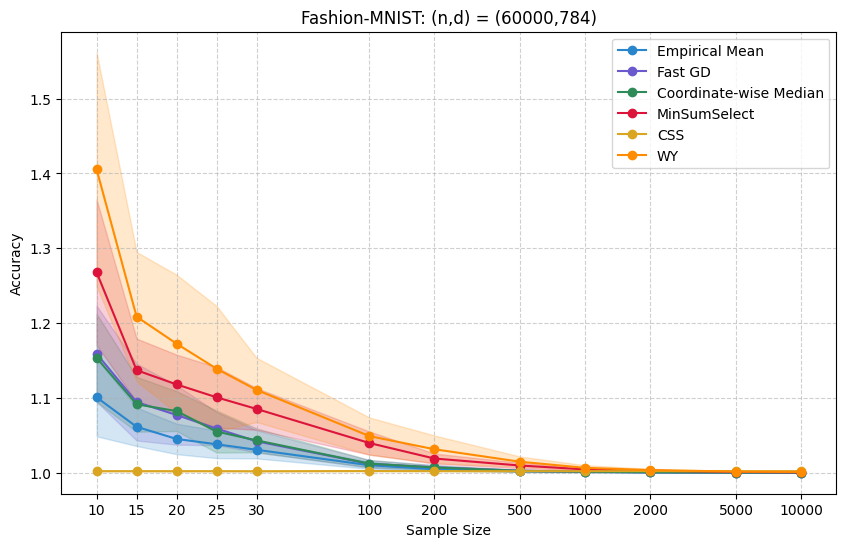}
        \caption{Accuracy vs. sample size (in log-scale)}
    \end{subfigure}
    \hfill
    \begin{subfigure}{0.45\textwidth}
        \centering
        \includegraphics[width=\textwidth]{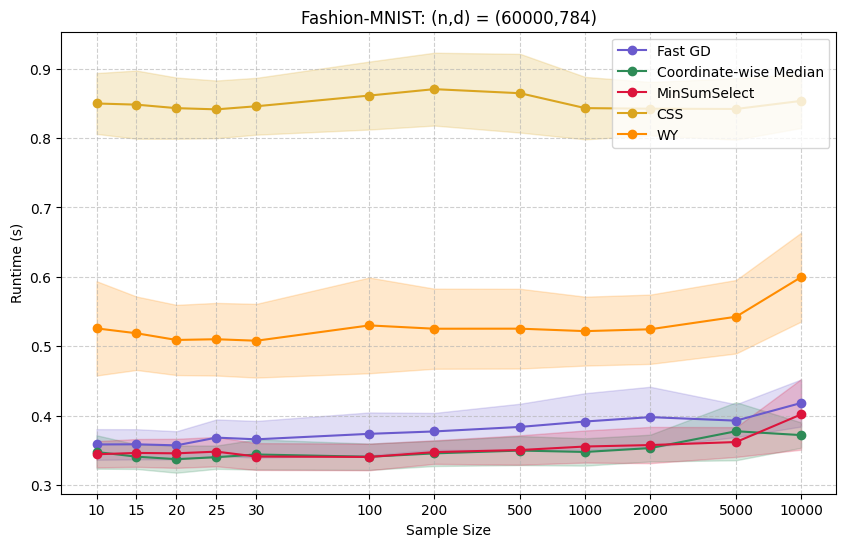}
        \caption{Runtime vs. sample size}
    \end{subfigure}
    \caption{Fashion-MNIST Dataset: Accuracy and runtime against sample size.}
    \label{fig:fmnist-log}
\end{figure}

\begin{figure}[htbp]
    \centering
    \begin{subfigure}{0.45\textwidth}
        \centering
        \includegraphics[width=\textwidth]{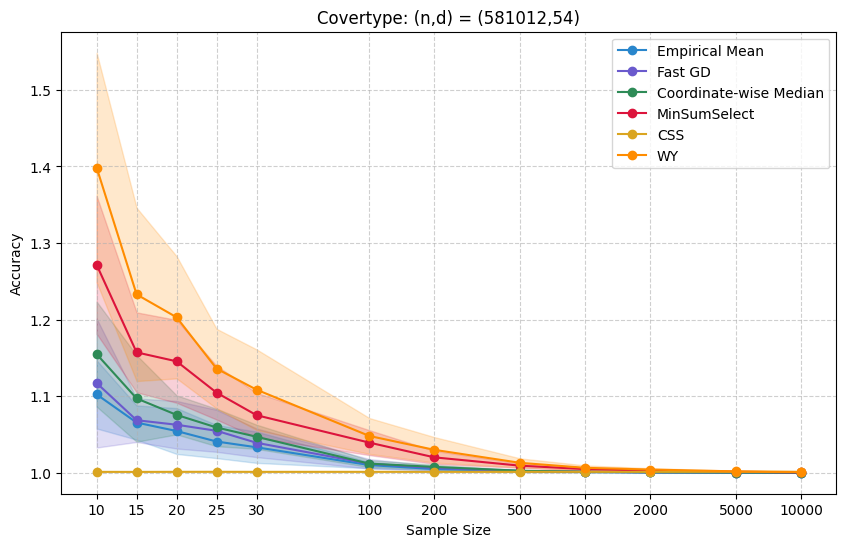}
        \caption{Accuracy vs. sample size (in log-scale)}
    \end{subfigure}
    \hfill
    \begin{subfigure}{0.45\textwidth}
        \centering
        \includegraphics[width=\textwidth]{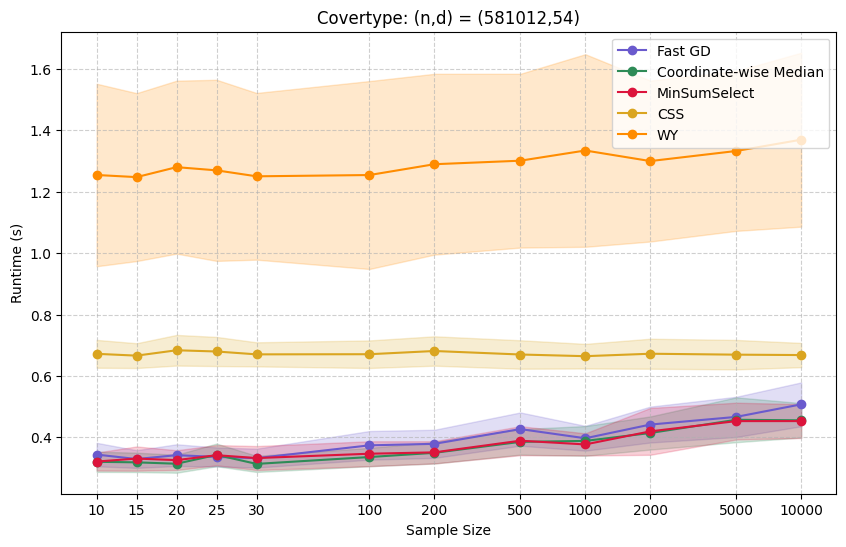}
        \caption{Runtime vs. sample size}
    \end{subfigure}
    \caption{CoverType Dataset: Accuracy and runtime against sample size.}
    \label{fig:ctype-log}
\end{figure}

Our observations from the experimental evaluation on real-world datasets provide a few insights on the proposed methods. Among all algorithms, \textsc{MinSumSelect} and \textsc{CoordWiseMedian} emerge as the fastest, consistently achieving superior runtime performance across all datasets. Their efficiency highlights their suitability for large-scale applications where computational speed is critical. In terms of accuracy, both \textsc{MinSumSelect}, \textsc{CoordWiseMedian} and \textsc{FastGD} perform comparably, with \textsc{FastGD} showing a slight edge in most cases. Notably, \textsc{FastGD} almost never underperforms relative to \textsc{CoordWiseMedian}, despite both offering the same theoretical guarantees.  This suggests that, in practice, applying a few gradient-based refinement steps toward the geometric median, starting from the coordinate-wise median, can yield measurable improvements.

Interestingly, despite having the weakest theoretical guarantees, the empirical mean consistently ranks among the best algorithms in terms of accuracy. Since it can also be computed extremely quickly via the closed form, it is also by far the fastest among the candidate algorithms. While there exist examples where the empirical mean does not provide a good and robust estimation, these examples do not seem to appear on any of the data sets we tried and may not be common in practice, see the appendix for a more detailed discussion. Nevertheless, the empirical mean is not consistently the best estimator, which goes to the \textsc{CSS}-estimator, which also requires extensive filtering and preprocessing and is among the slowest available methods. The \textsc{WY} algorithm performs worse in terms of accuracy. 
While it is not known whether the stated sample complexity upper bounds for \textsc{CSS} and \textsc{WY} are tight, these observation may indicate that these algorithms are not tightly analyzed. Additionally, they are considerably slower than our algorithms, regardless of sample size, due to inherent overhead in pre- and postprocessing steps. However, optimization is very fast for these algorithms because they simply output the mean.

In summary, the algorithms proposed in this paper are always competitive in terms of accuracy with the other methods, at least when given a moderately large sample size, while being among the fastest methods available. While the refinement via \textsc{FastGD} can offer improvements in practice, the best candidate algorithms seem to be either \textsc{CoordWiseMedian} or taking the empirical mean.

\paragraph{Accuracy and Runtime vs. Sample Size in Linear Scale}

We replot accuracy vs. sample size in linear scale (Figure \ref{fig:all-linear}), since the log-scale visually compresses differences and obscures linear growth, especially when the slope is small. The trends are indeed mildly increasing and consistent with near-linear growth, albeit with a small slope. We believe the soft growth observed on the original dataset is largely due to implementation-level factors. In particular, our use of NumPy's vectorized operations can lead to non-obvious runtime behavior, as such operations benefit from low-level optimizations (e.g., memory locality, multi-threaded backends) and often incur fixed overheads.

We also add further experimentation for larger sample sizes (Figure \ref{fig:rand-linear}). Specifically, we generated a synthetic dataset from a $200$-dimensional multivariate Gaussian distribution and extended the sample size up to $5 \cdot 10^5$. The results are in line with theory: They show that the error decreases sharply as the sample size grows and that the runtime grows linearly with sample size. Unfortunately, we were unable to go beyond this sample size due to memory limitations in our environment, where the process exhausts the available RAM. 

\begin{figure}[htbp]
    \centering
    \begin{subfigure}{0.7\textwidth}
        \centering
        \includegraphics[width=\textwidth]{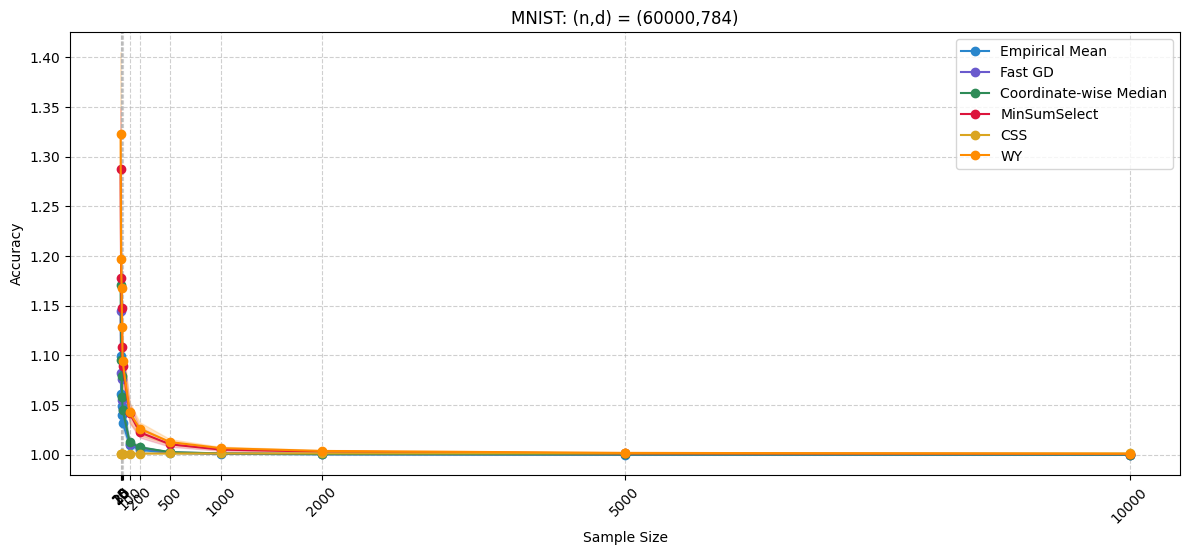}
    \end{subfigure}
    \hfill
    \begin{subfigure}{0.7\textwidth}
        \centering
        \includegraphics[width=\textwidth]{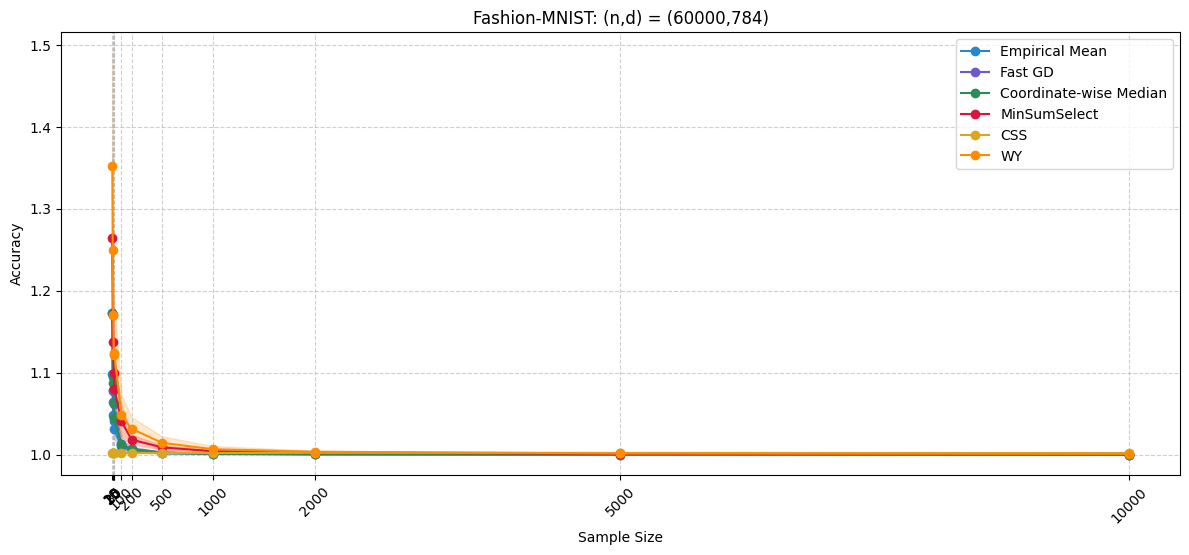}
    \end{subfigure}
    \hfill
    \begin{subfigure}{0.7\textwidth}
        \centering
        \includegraphics[width=\textwidth]{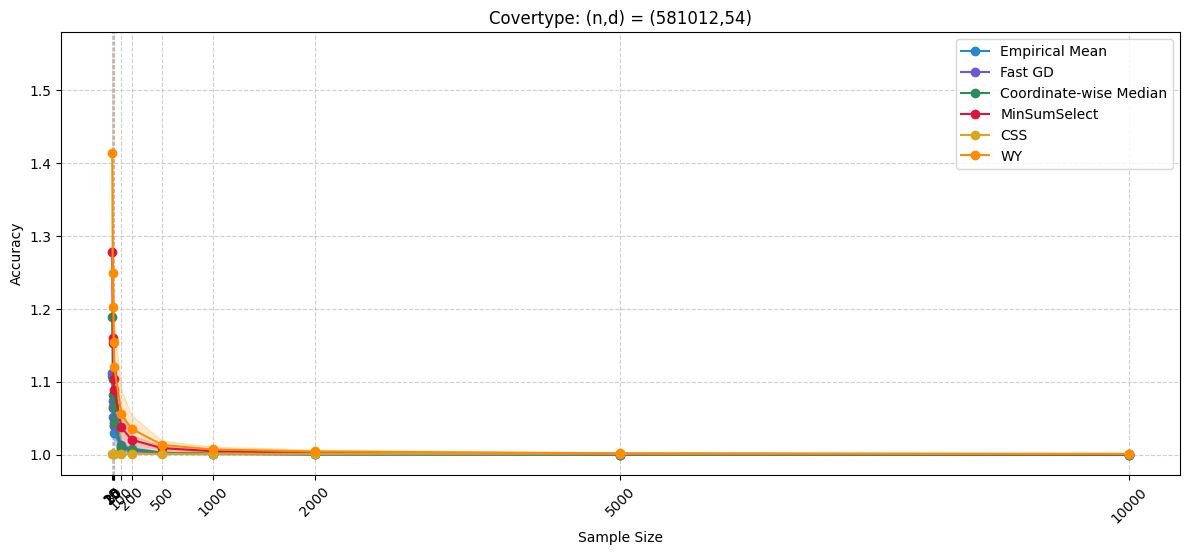}
    \end{subfigure}
    \caption{MNIST, Fashion-MNIST and CoverType Datasets: Accuracy against sample size (in linear scale).}
    \label{fig:all-linear}
\end{figure}

\begin{figure}[htbp]
    \centering
    \begin{subfigure}{0.7\textwidth}
        \centering
        \includegraphics[width=\textwidth]{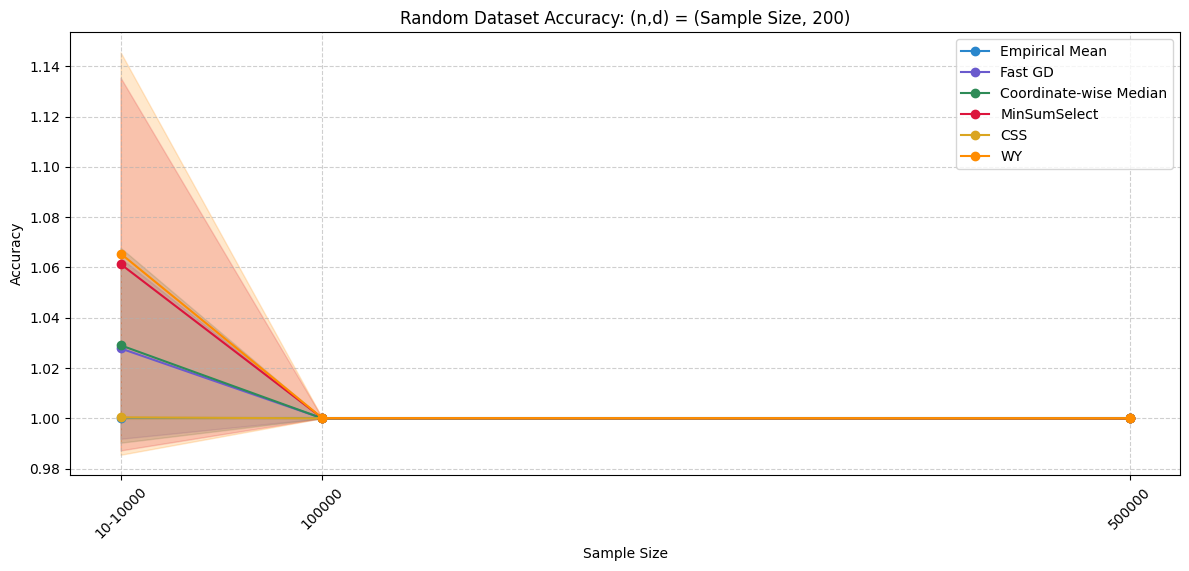}
        \caption{Accuracy vs. sample size (in linear scale)}
    \end{subfigure}
    \hfill
    \begin{subfigure}{0.7\textwidth}
        \centering
        \includegraphics[width=\textwidth]{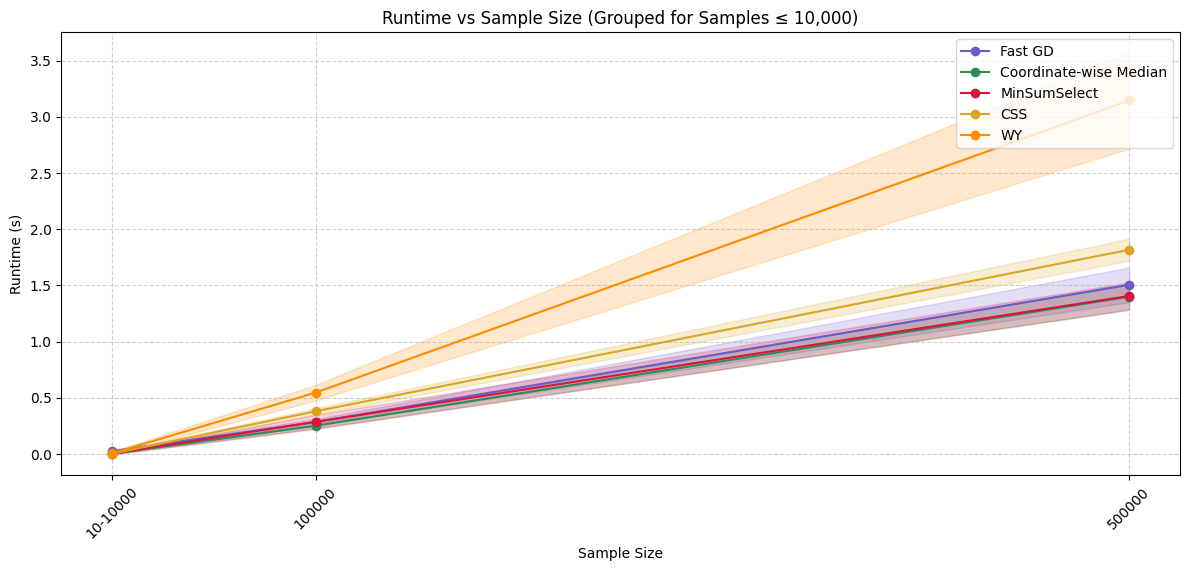}
        \caption{Runtime vs. sample size}
    \end{subfigure}
    \caption{Synthetic Dataset: Accuracy and runtime against sample size (in linear scale).}
    \label{fig:rand-linear}
\end{figure}

\clearpage

\subsubsection{Accuracy and Runtime vs. Dimension}

The experiments here illustrate how accuracy and runtime vary with dimensionality (Figure \ref{fig:all-dimensions}). Specifically, we generated a synthetic dataset of $10,000$ samples drawn from multivariate Gaussian distributions with dimensions ranging from $10$ to $1000$. As predicted by theory, the errors of both the coordinate-wise median and the fast gradient descent estimator remain low and stable across dimensions. Interestingly, the CSS algorithm shows rapid improvement as the dimension increases, stabilizing around dimension $500$. As expected, the empirical mean achieves the best accuracy since the data are drawn i.i.d. from a Gaussian distribution. In terms of runtime, we observe a roughly linear growth with dimension, in line with theory.

\begin{figure}[htbp]
    \centering
    \begin{subfigure}{0.45\textwidth}
        \centering
        \includegraphics[width=\textwidth]{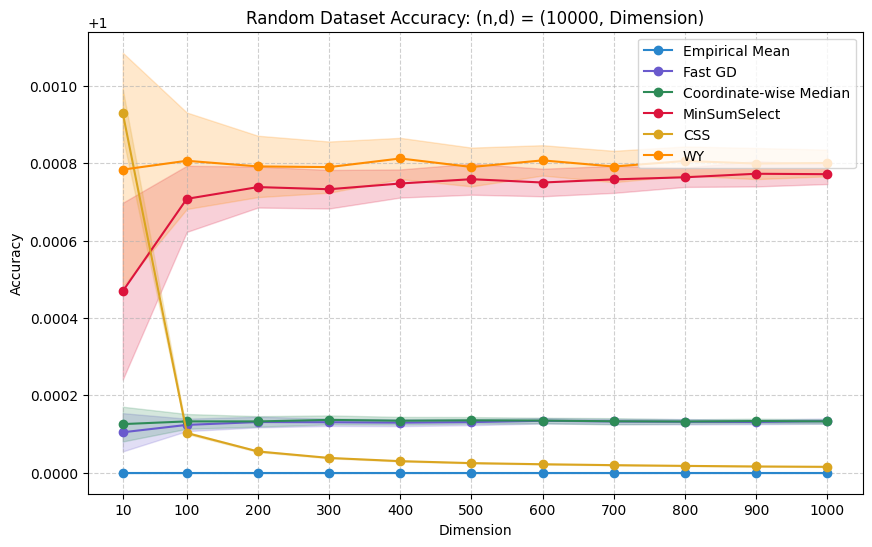}
        \caption{Accuracy vs. dimension (in linear scale)}
    \end{subfigure}
    \hfill
    \begin{subfigure}{0.45\textwidth}
        \centering
        \includegraphics[width=\textwidth]{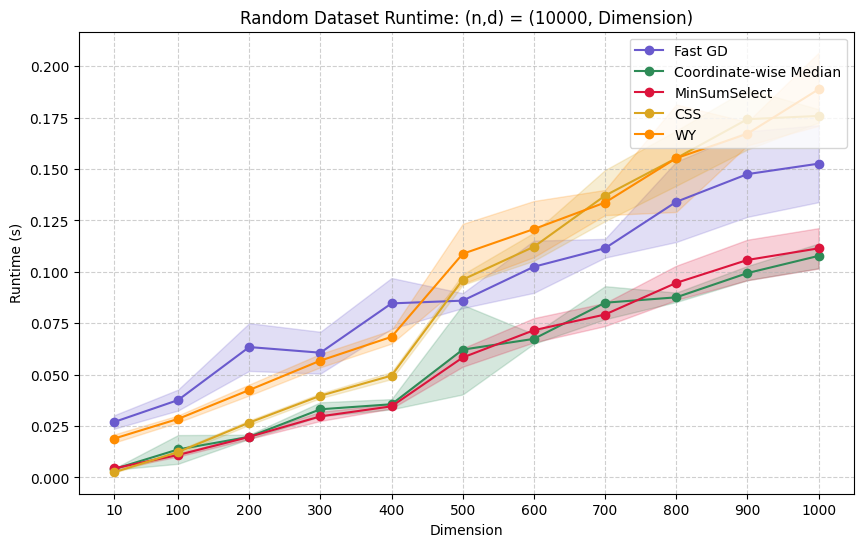}
        \caption{Runtime vs. dimension (in linear scale)}
    \end{subfigure}
    \caption{Synthetic Dataset: Accuracy and runtime against dimension.}
    \label{fig:all-dimensions}
\end{figure}

\section*{Acknowledgements}
We thank the anonymous reviewer from a previous version of this paper for pointing an improvement of our analysis of the coordinate-wise median-of-means estimator. 
Matteo Russo was partially supported by the FAIR (Future Artificial Intelligence Research) project PE0000013, the NextGenerationEU program within the PNRR-PE-AI scheme (M4C2, investment 1.3, line on Artificial Intelligence), the PNRR MUR project IR0000013-SoBigData.it, and by the MUR PRIN grant 2022EKNE5K (Learning in Markets
and Society).
Chris Schwiegelshohn was supported by a Google Research Award and by the Independent Research Fund Denmark (DFF) under a Sapere Aude Research Leader grant No 1051-00106B.
Sudarshan Shyam was partially supported by the Independent Research Fund Denmark (DFF) under grant 2032-00185B.

\clearpage

\bibliography{references}

\end{document}